\newtheorem{theorem}{Theorem}[section]
\newtheorem{lemma}[theorem]{Lemma}
\newtheorem{corollary}[theorem]{Corollary}
\theoremstyle{definition}
\theoremstyle{remark}
\newtheorem{remark}[theorem]{Remark}
\numberwithin{equation}{section}
\newcommand{\beq}{\begin{equation}}
\newcommand{\eeq}{\end{equation}}
\newcommand{\TT}{\mathbb{T}}
\newcommand{\ZZ}{\mathbb{Z}}
\newcommand\bbT{\mathbb T}
\newcommand{\U}{\mathcal{U}}
\newcommand{\V}{\mathcal{V}}
\renewcommand{\cL}{\mathcal{L}}
\newcommand{\cG}{\mathcal{G}}
\newcommand{\intomega}{\int^{P/M, n}\!\!\!\!\!\omega}
\newcommand {\be}{\begin{equation}}
\newcommand {\ee}{\end{equation}}
\newcommand{\h}{\begin{eqnarray*}}
\newcommand{\e}{\end{eqnarray*}}
\begin{document}

%----------------------------------------------------------------------------------------------------------------------------------------------------------------------------------------------------------%

% \title[short text for running head]{full title}
\title
%{Generalized geometry, exotic twisted equivariant cohomology and T-duality}
{T-duality in an H-flux: \\ Exchange of momentum and winding\\}

%    Only \author and \address are required; other information is
%    optional.  Remove any unused author tags.

%    author one information
% \author[short version for running head]{name for top of paper}
\author{Fei Han}
\address{Department of Mathematics,
National University of Singapore, Singapore 119076}
%\curraddr{}
\email{mathanf@nus.edu.sg}
%\thanks{}

%  author two information
 \author{Varghese Mathai}
\address{School of Mathematical Sciences,
University of Adelaide, Adelaide 5005, Australia}
\email{mathai.varghese@adelaide.edu.au}
%\thanks{}

%    \subjclass is required.
\subjclass[2010]{Primary 55N91, Secondary 58D15, 58A12, 81T30, 55N20}
\keywords{}
\date{}
%\thanks{}

\maketitle

\begin{abstract}
Using our earlier proposal for Ramond-Ramond fields in an H-flux on loop space \cite{HM15}, {we extend the Hori isomorphism in
\cite{BEM04a,BEM04b} from invariant differential forms, to invariant exotic differential forms} such that the {\em momentum} and {\em winding numbers} are exchanged,
filling in a gap in the literature.  We { also extend the compatibility of the action of invariant exact Courant algebroids on the
T-duality isomorphism in \cite{CG}, to the
T-duality isomorphism on exotic invariant differential forms. }

\end{abstract}

\tableofcontents

%-------------------------------------------------------------------------------------------------------------------------------------------------------------------------------------------------------------------------------
\section*{Introduction}
In \cite{BEM04a,BEM04b}, T-duality in a background flux was studied for the first time, and we summarise it here to begin with. {{Upon compactifying 
spacetime in one direction to a principal circle bundle $\TT \to Z \stackrel{\pi}{\to} X$ with background { $\TT$-invariant }flux $H$ which is a closed 3-form on $Z$.}} Then 
there is a T-dual circle bundle $\hat\TT\to\hat Z \stackrel{\hat\pi}{\to} X$ with T-dual background { $\hat \TT$-invariant flux $\hat H$} which is a closed 3-form on $\hat Z$ such that $c_1(Z) = \hat\pi_*[\hat H]$ and $c_1(\hat Z) = \pi_*[ H]$, and the constraint that $[H]=[\hat H]$ on the correspondence space
$Z\times_X \hat Z$  ensures that $[\hat H]$ is uniquely defined. The slogan that, 
\begin{center}
{\em the Chern class is exchanged with the background flux} 
\end{center}
encapsulates T-duality in a background flux, so there is a change in topology if either the Chern class or the background flux is topologically nontrivial.

Choosing { $\TT$-invariant connection 1-forms $A$ on $Z$ and $\hat \TT$-invariant connection $\hat A$ on $\hat Z$ respectively}, the rules for transforming the RR fields can be encoded in the 
twisted Fourier-Mukai transform \cite{BEM04a, BEM04b}
 \begin{equation} \label{eqAh}
T_*G =  \int_{\bbT}  G \, e^{ -A \wedge \hat A }, 
\end{equation}
where $G \in \Omega^\bullet(Z)^\bbT$ is an invariant differential form on spacetime which is the total Ramond-Ramond field-strength,
\begin{center}
$G\in\Omega^{\bar k}(Z)^\bbT \quad$ for { Type IIA};\\
$G\in\Omega^{\overline{k+1}}(Z)^\bbT \quad$ for { Type IIB},\\
\end{center}
for $\bar k= k \mod 2$, and where the right hand side of equation \eqref{eqAh} is an invariant differential form on $Z\times_X\hat Z$, and 
the integration is along the $\bbT$-fiber of $Z$. 

Define the Riemannian metrics on $Z$ and $\hat Z$ by
$$  
g_Z=\pi^*g_X+R^2 A\odot A,\qquad g_{\hat Z}=\hat\pi^*g_X+ R^{-2}\hat A\odot\hat A.
$$
Under the above choices of Riemannian metrics and flux forms, the twisted Fourier-Mukai transform is an isometry
\begin{equation} \label{eqT-duality}
T\colon\Omega^{\bar k}(Z)^\TT\to\Omega^{\overline{k+1}}(\hat Z)^{\hat\TT},
\end{equation}
for $\bar k= k \mod 2$,  inducing isometries on the spaces of twisted
harmonic forms. 
%and hence on the twisted cohomology groups.

In particular, $R$ goes to $1/R$, which is an important feature of T-duality, and there is an induced degree-shifting isomorphism
between twisted cohomology groups,
\be
T_* : H^{\bar k}(Z, H) \cong H^{\overline{k+1}}(\hat Z, \hat H).
\ee
for $\bar k= k \mod 2$, { where $H^*(Z, H):=\{\Omega^*(Z)^{\TT}, d+H)$ and $H^*(\hat Z, H):=\{\Omega^*(\hat Z)^{\hat \TT}, d+\hat H).$} These twisted cohomology groups were first defined in \cite{RW} and their relation to 
D-branes in an H-flux and their charges were further explored in \cite{BCMMS,MS}.

One of the deficiencies of the T-duality isomorphism in equation \eqref{eqT-duality} is that it is only 
defined on the smaller configuration space of invariant differential forms on spacetime $Z$, and therefore is not easy to formulate the exchange of momentum 
and winding in this framework. We rectify this in our paper as follows.

Let $\xi, \hat\xi$ denote the complex line bundles associated to the circle bundles $Z, \hat Z$ and the standard representation of the circle on complex plane respectively. Define the {\bf exotic differential forms} by
$$\mathcal{A}^{\bar k}(Z)^{\TT}=\bigoplus_{n\in \ZZ}\mathcal{A}^{\bar k}_n(Z)^{\TT}:=\bigoplus_{n\in \ZZ}\Omega^{\bar k}(Z, \pi^*(\hat{\xi}^{\otimes n}))^{\TT},$$
$$\mathcal{A}^{\bar k}(\hat{Z})^{\hat\TT}=\bigoplus_{n\in \ZZ}\mathcal{A}^{\bar k}_n(\hat{Z})^{\hat\TT}:=\bigoplus_{n\in \ZZ}\Omega^{\bar k}(\hat{Z}, \hat{\pi}^*(\xi^{\otimes n}))^{\hat\TT}$$
for $\bar k= k \mod 2$.for $\bar k= k \mod 2$. 

An analogous space of exotic differential forms and equivariantly flat superconnection \cite{MQ} was
  first defined on loop space in our earlier paper \cite{HM15}, which was inspired by and generalises some of the results 
  in \cite{B85}, and we now briefly outline here. It motivates
  the definition of the spaces of exotic differential forms above.

  Let $(H, B_\alpha, F_{\alpha\beta}, L_{\alpha\beta})$ denote a gerbe with connection on $Z$ (cf. \cite{Brylinski}), where 
$(H, B_\alpha, F_{\alpha\beta})$ denotes the Deligne class of the closed integral 3-form $H$ with respect to 
a Brylinski open cover (cf. \cite{HM15}) and $ L_{\alpha\beta}$
denotes the line bundles on double overlaps that determines the gerbe $\cG$. The holonomy of the gerbe is then a line 
bundle $\cL$ with connection $\tau(B_\alpha)$ having curvature $\tau(H)$ on loop space $LZ$, where $\tau$ denotes the transgression map.
It $\iota_0 : Z \to LZ$ denotes the embedding of $Z$ into $LZ$ as the constant loops, then noting that $\iota_0^*(\cL)$ is
canonically trivial, we proved that 
{ 
\be \label{degree0}  \iota_0^*: \Omega^{\bar k}(LZ, \cL)^{S^1} \longrightarrow \Omega^{\bar k}(Z) \ee}intertwines the equivariantly flat superconnection
 on the left hand side and $d+H$ on the right hand side, where the left hand side was called there the exotic differential forms on 
 loop space. 
 
 The precise relation between \cite{HM15} and this paper is that when $Z$ is the total space of a principal circle bundle, then
  there is a natural infinite sequence of embeddings $\iota_n : Z \to LZ$ defined by $\iota_n(x): S^1\ni t\mapsto  \gamma_x(t)=t^n\cdot x$, for all $n\in \ZZ$. We consider such sequence of embeddings motivated by the fact that there are $\mathbb{Z}$ many connected components 
  in the loop space $L\TT$. 
  We have 
  $\iota_n^*(\cL) \cong  \pi^*(\hat\xi)^{\otimes n}$ since they have the same Chern class. The loop space $LZ$ has the natural circle action by rotating loops, and $Z$ has a circle action as the total space of circle bundle. To tell the difference of these two circle actions, we use $S^1$ for the  circle action by rotating loops. 
{ We have that for $n\neq 0$, 
$$\iota_n^*: \Omega^{\bar k}(LZ, \cL)^{S^1} \longrightarrow \Omega^{\bar k}(Z, \pi^*(\hat{\xi}^{\otimes n}))^{\TT}$$ intertwines the equivariantly flat superconnections on both spaces. We would like to point out that we don't automatically see  the $\TT$-invariance things on $Z$ for $n=0$ in the map (\ref{degree0}). This comes from the effects of our embedding $\iota_n$. However this point of view does motivate us to develop the exotic theories on $Z$ and eventually relates to the Fourier expansion as discussed below.}

{ Define the subspace of weight $-n$ differential forms on $Z$,
\be \Omega^*_{-n}(Z)=\{\omega \in \Omega^*(Z)| L_{v}\omega=-n\omega\}.\ee 
It is easy to see that
$$\Omega^{\bar k}_0(Z)=\Omega^{\bar k}(Z)^\TT, \ \  \mathcal{A}^{\bar k+1}_0(\hat{Z})^{\hat\TT}=\Omega^{\overline{k+1}}(\hat Z)^{\hat\TT}.$$
Under the above choices of Riemannian metrics and flux forms, our results show that the Fourier-Mukai transform $T$ can be extended to a sequence of  isometries,
\begin{equation}
\tau_n \colon \Omega^{\bar k}_{-n}(Z) \to \mathcal{A}^{\bar k+1}_n(\hat{Z})^{\hat\TT},
\end{equation}
for $\bar k= k \mod 2$, and is defined by the {\em exotic Hori formula} from $Z$ to $\hat Z$ given in equation \eqref{exotic-Hori}. When $n=0$, $\tau_0=T$. Theorem \ref{main} shows that the twisted de Rham differential $d+H$ maps to the differential $-(\hat{\pi}^*\nabla^{\xi^{\otimes n}}-\iota_{n\hat{v}}+\hat{H})$.  { One similarly has 
a sequence of  isometries,
\begin{equation}
\sigma_n \colon\mathcal{A}^{\bar k}_n(Z)^{\TT} \to \Omega^{\bar k +1}_{-n}(\hat{Z}),
\end{equation}
for $\bar k= k \mod 2$, and is defined by the {\em inverse exotic Hori formula} form $Z$ to $\hat Z$ given in equation \eqref{exotic-Hori2} and the differential $\pi^*\nabla^{\hat\xi^{\otimes n}}-\iota_{n{v}}+{H}$
maps to  the twisted de Rham differential $-(d+{\hat H})$. Note that $\sigma_0=T$. Similarly, one can define the sequences of isometries $\hat\tau_n, \hat\sigma_n$ on $\hat Z$. Although the extension of the Fourier-Mukai transform 
to all differential forms on $Z$ is slightly asymmetric, one has the crucial identities, verified in Theorem \ref{main},
\begin{align}
{\rm -Id} =\hat\sigma_n \circ \tau_n \colon \Omega^{\bar k}_{-n}(Z)  \longrightarrow \Omega^{\bar k}_{-n}(Z),\\
{\rm -Id} =\hat\tau_n \circ \sigma_n  \colon \mathcal{A}^{\bar k}_n(Z)^{\TT} \longrightarrow \mathcal{A}^{\bar k}_n(Z)^{\TT}.  
\end{align}
This is interpreted as saying that T-duality when applied twice, returns one to minus of the identity. It was verified in the special case when $n=0$ in \cite{BEM04a,BEM04b}. We would like to point out that the minus sign comes from the convention of integration along the fiber. 

This shows that for each of either $Z$ or $\hat Z$, there are two theories (at degree 0 the two theories coincide), and there are also graded isomorphisms between the two theories of both sides. }

Theorem \ref{concentration} tells us that when $n\neq 0$,  
the complex $(\mathcal{A}^{\bar k+1}_n(\hat{Z})^{\hat\TT}, \hat{\pi}^*\nabla^{\xi^{\otimes n}}-\iota_{n\hat{v}}+\hat{H})$ has vanishing cohomology. Therefore, when $n\neq 0$, the complex $(\Omega^{\bar k}_{-n}(Z), d+H)$ also has 
vanishing cohomology. In Corollary \ref{nullcoho}, we construct a homotopy to show this by taking advantage of the homotopy operator previously constructed in Theorem \ref{concentration}.

For a general form $\omega\in \Omega^*(Z)$, not necessarily $\TT$-invariant, one can perform the family Fourier expansion (see Section 2.4) to get 
$$
 \Omega^*(Z) \ni \omega=\sum_{n=-\infty}^\infty\omega_n \in \bigoplus_{n=-\infty}^\infty \Omega^*_{n}(Z)
 $$ 
with $\omega_n\in \Omega^{\bar k}_{n}(Z)$ and one takes the Fr\'echet space completion of the direct sum above.  Since $H$ is $\TT$-invariant, if $(d+H)\omega=0$, then $(d+H)\omega_n=0$.  { Corollary \ref{nullcoho}} shows that $\omega_n$ must be $(d+H)$-exact when $n\neq 0$. Therefore the cohomology of $(\Omega^*(Z), d+H)$ is concentrated on the $n=0$ part, i.e. the $\TT$-invariant part. This indicates why in the previous study, we only consider the $\TT$-invariant forms. 
}

Now we are able to define momentum and winding in the much larger configuration space of invariant exotic differential forms, $\mathcal{A}^{\bar k}(Z)^{\TT}$
 as follows. The multiple $w$ of the infinitesimal generator $v$ of the circle action on $Z$ is the {\em winding}, as it agrees with 
winding around the circle direction when the circle bundle is a product, cf. \cite{Hori}. The tensor power $m$ of the line bundle ${\xi}$  is the {\em momentum}, as it agrees with 
momentum when the circle bundle is a product, cf. \cite{Hori}. In Theorem \ref{flat} and section 2.4, we show that
the momentum on the spacetime $Z$ needs to be equal to the winding number on the T-dual spacetime $\hat{Z}$, 
in order that  the exotic differential is an equivariantly flat superconnection.  The T-dual side also exhibits this phenomena. 
Thus our slogan here is, \\

 \begin{quote}{\em The momentum (on spacetime $Z$) gets exchanged with the winding number (on the T-dual spacetime $\hat{Z}$)
and the winding number (on the spacetime $Z$) gets exchanged with the momentum (on the T-dual spacetime $\hat{Z}$)}.\\
\end{quote}

Finally, the invariant exact Courant algebroid $(TZ\oplus T^*Z)^\TT_H$, together with the usual Dorfman bracket, $u\circ_H v$, has a representation (or action) on the exotic differential forms via a novel exotic Lie derivative  $L_{X+\alpha}^\xi$ (c.f. Theorem \ref{thm:corant-relations}) and where the Courant bracket is still as before, namely the antisymmetrization of the Dorfmann bracket.
For the relationship between invariant Courant algebroids and T-duality, see \cite{CG}.
\bigskip

\noindent{\em Acknowledgements.}
The first author was partially supported
by the grant AcRF R-146-000-218-112 from National University
of Singapore. The second author was partially supported by funding from the Australian Research Council, through Discovery Project grant DP150100008 and by the Australian Laureate Fellowship  FL170100020. 
The first author also thanks Qin Li for helpful discussions.
The second author thanks Maxim Zabzine (Uppsala), who posed the question (private communication) that is answered in this paper.
Both authors thank Peter Bouwknegt for sharing some insights, and also the
 Chern Institute of Mathematics (Tianjin) for providing a stimulating research environment on our visit.

%-------------------------------------------------------------------------------------------------------------------------------------------------------------------------------------------------------------------------------
\section{Exact Courant algebroids and exotic differential forms}
% with coefficients in a line bundle}

%\section{Generalized geometry and differential forms with coefficients in a line bundle}

In this section, we consider invariant exact Courant algebroids and their actions on invariant differential forms with coefficients in a line bundle.

Let $M$ be a smooth manifold. Consider the generalized tangent bundle $\mathcal{T}M=TM\oplus T^*M$. On sections of $\mathcal{T}M$, there is a natural field of non-degenerate symmetric bilinear form, namely for $X+\alpha, Y+\beta\in \Gamma(TM\oplus T^*M)$, we put
\be \langle X+\alpha, Y+\beta\rangle=\frac{1}{2}(\iota_X\beta+\iota_Y\alpha). \ee

The Clifford algebra $\mathrm{Cliff}(TM\oplus T^*M)$ is the algebra with generators $\gamma_\U,\, \U\in \Gamma(TM\oplus T^*M)$ and relations
\be \{ \gamma_\U, \gamma_\V\}=2\langle \U, \V\rangle. \ee

Further assume that $M$ admits a smooth $\TT$-action and let $\xi$ be a $\TT$-equivariant complex line bundle over $M$. Let $\nabla^\xi$  be a $\TT$-invariant connection on $E$  and $H\in \Omega^3_{cl}(M)$ a closed 3-form such that 
\be \label{maineqn} (\nabla^{\xi}-\iota_v+H)^2+L_v^\xi=0,\ee 
where $v$ is the Killing vector field of the $\TT$-action. $\nabla^{\xi}-\iota_v+H$ and $L_v^\xi$ are operators acting on $\Omega^*(M,\xi)$, the space of smooth differential forms with coefficients in $\xi$. 

It is known that the Clifford algebra $\mathrm{Cliff}(TM\oplus T^*M)$ has a natural action on $\Omega^*(M)$. One can easily extend this action to $\Omega^*(M,\xi)$.
\begin{lemma} We have a representation of the Clifford algebra $\mathrm{Cliff}(TM\oplus T^*M)$ on $\Omega^*(M,\xi)$ given by
$$\gamma_{X+\alpha}\cdot \varphi=\iota_X\varphi+\alpha\wedge\varphi, \ \ \ \ X+\alpha\in \Gamma(TM\oplus T^*M), \ \ \ \varphi \in \Omega^*(M,\xi).$$
\end{lemma}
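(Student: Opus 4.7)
The plan is to verify the defining Clifford anticommutation relation $\{\gamma_\U, \gamma_\V\} = 2\langle \U, \V\rangle$ on the generators; by the universal property of $\mathrm{Cliff}(TM\oplus T^*M)$ this is all that is needed to promote the assignment $\U \mapsto \gamma_\U$ to a bona fide representation on $\Omega^*(M,\xi)$.

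First I would reduce to the untwisted case. Working locally on an open set $U \subset M$ where $\xi$ is trivialized by a nowhere-vanishing section $s$, any $\varphi \in \Omega^*(M,\xi)|_U$ has the form $\omega \otimes s$. The operators $\iota_X$ and $\alpha \wedge$ act only on the form factor and leave $s$ as a passive coefficient, so
$$\{\gamma_{X+\alpha},\gamma_{Y+\beta}\}(\omega \otimes s) = \bigl(\{\gamma_{X+\alpha},\gamma_{Y+\beta}\}\omega\bigr)\otimes s.$$
Because the proposed relation is $C^\infty(M)$-linear in $\varphi$, checking it on such a trivializing cover and extending by a partition of unity is entirely standard and transfers the identity from $\Omega^*(M)$ to $\Omega^*(M,\xi)$.

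The substantive calculation is then the untwisted identity, which I would handle by expanding
$$\gamma_{X+\alpha}\gamma_{Y+\beta}\omega + \gamma_{Y+\beta}\gamma_{X+\alpha}\omega$$
into four pairs: interior-interior, wedge-wedge, and two mixed pairs. The interior-interior terms cancel by $\iota_X\iota_Y + \iota_Y\iota_X = 0$; the wedge-wedge terms cancel by $\alpha\wedge\beta + \beta\wedge\alpha = 0$; and the mixed terms, rewritten using the Cartan/Leibniz rule $\iota_X(\beta\wedge\omega) = (\iota_X\beta)\omega - \beta\wedge\iota_X\omega$, collapse to $(\iota_X\beta + \iota_Y\alpha)\omega = 2\langle X+\alpha, Y+\beta\rangle \omega$, which is exactly the right-hand side.

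There is no real obstacle here: this is a routine algebraic verification whose content lies entirely in the elementary graded-commutativity of interior products and exterior multiplication, and the line bundle $\xi$ plays only a bookkeeping role. The point of stating the lemma is less the proof than the fact that it licenses the use of Clifford-theoretic language when subsequently defining the exotic Lie derivative $L_{X+\alpha}^\xi$ and the associated action of the invariant exact Courant algebroid on $\Omega^*(M,\xi)$.
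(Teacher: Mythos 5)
Your proof is correct and matches the paper's (implicit) approach: the paper in fact gives no proof at all, simply noting that the known Clifford action on $\Omega^*(M)$ "can easily be extended" to $\Omega^*(M,\xi)$ because $\iota_X$ and $\alpha\wedge$ act only on the form factor, which is exactly the reduction you carry out before the routine verification of $\{\gamma_\U,\gamma_\V\}=2\langle\U,\V\rangle$.
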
 

For $X+\alpha, Y+\beta\in \Gamma(TM\oplus T^*M)$ and $H\in \Omega^3_{cl}(M)$, define the (twisted) Dorfmann bracket or Loday bracket by
\be (X+\alpha)\circ_H (Y+\beta)=[X,Y]+L_X\beta-\iota_Yd\alpha+\iota_X\iota_YH.
\ee
It is related to the (twisted) Courant bracket by 
\be \U\circ_H \V=[[\U, \V]]_H+d\langle \U, \V\rangle, \ \ \U, \V\in \Gamma(TM\oplus T^*M),  \ee or conversely,
\be [[\U, \V]]_H=\frac{1}{2}(\U\circ_H \V-\V\circ_H \U).\ee

For $\U=X+\alpha\in \Gamma(TM\oplus T^*M)$, we introduce an {\bf exotic twisted Lie derivative along $\U$} on $\Omega^*(M,\xi)$ by
\be \mathcal{L}^\xi_{\U}=L_X^\xi-\mu^\xi_X+(d\alpha-\iota_v\alpha+\iota_X H)\wedge,\ee
where $L_X^\xi$ is the Lie derivative along the direction $X$ and $\mu^\xi_X$ is the moment of the $\TT$-invariant connection $\nabla^\xi$ along the direction $X$.  Evidently, $\mathcal{L}^\xi_{\U}$ depends on $\nabla^\xi, v$ and $H$.

We have the following relations.  
\begin{theorem}\label{thm:corant-relations} Let \,$\U, \V\in \Gamma(TM\oplus T^*M)$. Then on  $\Omega^*(M,\xi)$, we have 
\be
\begin{split}
&\{\gamma_\U, \gamma_\V\}=2\langle \U, \V\rangle,\\
&\{\nabla^{\xi}-\iota_v+H, \gamma_\U\}=\mathcal{L}^\xi_{\U},\\
& [\nabla^{\xi}-\iota_v+H, \mathcal{L}^\xi_{\U}]=0\ \mathrm{on}\ \Omega^*(M,\xi)^{\TT},\\
&[\mathcal{L}^\xi_{\U}, \gamma_\V]=\gamma_{\U\circ_H \V},\\
& [\mathcal{L}^\xi_{\U}, \mathcal{L}^\xi_{\V}]=\mathcal{L}^\xi_{\U\circ_H \V}=\mathcal{L}^\xi_{[[\U, \V]]_H} \mathrm{on}\ \Omega^*(M,\xi)^{\TT}.\\
\end{split}
\ee 
\end{theorem}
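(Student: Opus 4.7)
My plan is to treat the five identities in a definite order that exploits earlier ones, rather than expanding the fifth by brute force. The first identity is the standard Clifford relation on forms; it follows at once from the basic identities $\iota_X(\alpha\wedge\varphi)+\alpha\wedge\iota_X\varphi = (\iota_X\alpha)\varphi$ and $\iota_X\iota_Y+\iota_Y\iota_X=0$, tensored trivially with $\xi$.

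The second identity is the linchpin and the most computational step. I would abbreviate $D=\nabla^{\xi}-\iota_v+H$ and $\gamma_\U=\iota_X+\alpha\wedge$, then expand the anticommutator term by term, using: (i) the covariant Cartan formula $\{\nabla^\xi,\iota_X\}=L^\xi_X-\mu^\xi_X$; (ii) $\{\nabla^\xi,\alpha\wedge\}=d\alpha\,\wedge$, since $\nabla^\xi$ acts as $d$ on the form factor; (iii) $\{-\iota_v,\iota_X\}=0$ because contractions anticommute; (iv) $\{-\iota_v,\alpha\wedge\}=-(\iota_v\alpha)$, a function; (v) $\{H\wedge,\iota_X\}=(\iota_XH)\wedge$, using that $H$ is a $3$-form; and (vi) $\{H\wedge,\alpha\wedge\}=0$, since $H\wedge\alpha=-\alpha\wedge H$ as forms of odd degrees $3,1$. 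Collecting the non-zero pieces reproduces the definition of $\mathcal{L}^\xi_\U$.

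For the third identity I would \emph{not} expand: using that both $D$ and $\gamma_\U$ are odd,
\[
[D,\mathcal{L}^\xi_\U] \;=\; [D,\{D,\gamma_\U\}] \;=\; [D^2,\gamma_\U] \;=\; -[L^\xi_v,\gamma_\U],
\]
where the last equality uses the standing hypothesis $D^2=-L^\xi_v$. Assuming $\U$ is $\TT$-invariant, $\gamma_\U$ preserves $\Omega^*(M,\xi)^{\TT}$, so both $L^\xi_v\varphi$ and $L^\xi_v(\gamma_\U\varphi)$ vanish for $\varphi\in\Omega^*(M,\xi)^{\TT}$, giving the claim. The fourth identity I would again do by direct computation, splitting $\gamma_\V=\iota_Y+\beta\wedge$ and commuting each summand of $\mathcal{L}^\xi_\U$ through, using $[L^\xi_X,\iota_Y]=\iota_{[X,Y]}$, $[L^\xi_X,\beta\wedge]=(L_X\beta)\wedge$, $[(d\alpha)\wedge,\iota_Y]=-(\iota_Yd\alpha)\wedge$, $[(\iota_XH)\wedge,\iota_Y]=(\iota_X\iota_YH)\wedge$, and vanishing of the remaining pairings (the scalar term $-\iota_v\alpha$ and the moment $\mu^\xi_X$ commute with both $\iota_Y$ and $\beta\wedge$). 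Summing yields $\iota_{[X,Y]}+(L_X\beta-\iota_Yd\alpha+\iota_X\iota_YH)\wedge=\gamma_{\U\circ_H\V}$.

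Finally, the fifth identity is bootstrapped from the previous two. Writing $\mathcal{L}^\xi_\V=\{D,\gamma_\V\}$ and using the graded Jacobi identity (noting $\mathcal{L}^\xi_\U$ is even),
\[
[\mathcal{L}^\xi_\U,\mathcal{L}^\xi_\V] \;=\; \{D,[\mathcal{L}^\xi_\U,\gamma_\V]\} \;-\; \{[D,\mathcal{L}^\xi_\U],\gamma_\V\}.
\]
On $\Omega^*(M,\xi)^{\TT}$ the second bracket vanishes by identity~3, while the first equals $\{D,\gamma_{\U\circ_H\V}\}=\mathcal{L}^\xi_{\U\circ_H\V}$ by identities~4 and~2. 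For the last equality with the Courant bracket, I would observe that $\U\circ_H\V-[[\U,\V]]_H=d\langle\U,\V\rangle$ and that the operator $\mathcal{L}^\xi_{df}$ for a function $f$ reduces to multiplication by $-\iota_v df=-L_v f$, which vanishes on $\TT$-invariant functions, so $\mathcal{L}^\xi_{d\langle\U,\V\rangle}=0$ whenever $\U,\V$ are $\TT$-invariant. The only real obstacle is pedantic sign bookkeeping in identities~2 and~4 (especially keeping straight which terms come from the moment $\mu^\xi_X$ versus the ordinary Lie derivative); with the Jacobi reduction in place, identities~3 and~5 are essentially formal.
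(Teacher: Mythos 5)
Your proof is correct and follows essentially the same route as the paper: direct term-by-term expansion of the anticommutator for the second and fourth identities, and formal deduction of the third and fifth from $(\nabla^{\xi}-\iota_v+H)^2=-L_v^\xi$ combined with the second and fourth identities via the graded Jacobi identity. You are in fact slightly more explicit than the paper in flagging that $\U,\V$ need to be $\TT$-invariant for the third and fifth relations and in checking that $\mathcal{L}^\xi_{d\langle\U,\V\rangle}$ acts as multiplication by $-L_v\langle\U,\V\rangle=0$, points the paper leaves implicit.
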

\begin{proof} The first relation can be proved in a verbatim way as the situation without the presence of $\xi$. 
 
 To prove the second relation, we have
 \be 
 \begin{split}
 &\{\nabla^{\xi}-\iota_v+H, \gamma_\U\}\\
 =&\{\nabla^{\xi}-\iota_v+H, \iota_X+\alpha\wedge\}\\
 =&\{\nabla^{\xi}, \iota_X\}+\{\nabla^{\xi}, \alpha\wedge\}-\{\iota_v, \iota_X\}+\{\iota_v, \alpha\wedge\}+\{H, \iota_X\}+\{H,\alpha\wedge\}\\
 =& \{\nabla^{\xi}, \iota_X\}+d\alpha\wedge-\iota_v\alpha\wedge+\iota_XH\wedge\\
 =&L_X^\xi-\mu^\xi_X+(d\alpha-\iota_v\alpha+\iota_X H)\wedge.
 \end{split}
 \ee
 
To prove the third relation, simply notice that equation (\ref{maineqn}) tells us that on $\Omega^*(M,\xi)^{\TT}$, $\{\nabla^{\xi}-\iota_v+H, \nabla^{\xi}-\iota_v+H\}=0$ and apply the second relation. 

To prove the fourth relation, we have
\be
\begin{split}
&[L_X^\xi-\mu^\xi_X+(d\alpha-\iota_v\alpha+\iota_X H)\wedge, \iota_Y+\beta\wedge]\\
=&[L_X^\xi, \iota_Y]+[L_X^\xi, \beta\wedge]-[\mu^\xi_X, \iota_Y]-[\mu^\xi_X, \beta\wedge]+[d\alpha\wedge, \iota_Y]+[d\alpha\wedge, \beta\wedge]\\
&-[\iota_v\alpha\wedge, \iota_Y]-[\iota_v\alpha\wedge, \beta\wedge]+[\iota_XH\wedge, \iota_Y]-[\iota_XH\wedge, \beta\wedge]\\
=&\iota_{[X, Y]}+(L_X\beta)\wedge-0-0-\iota_Y(d\alpha)\wedge+0-0-0-\iota_Y(\iota_XH)-0\\
=&\iota_{[X, Y]}+(L_X\beta-\iota_Y(d\alpha)+\iota_X\iota_YH)\wedge,
\end{split} 
\ee
and this shows that 
$$[\mathcal{L}^\xi_{\U}, \gamma_\V]=\gamma_{\U\circ_H \V}.$$

The last relation can be deduced by combining the second, the third and the fourth relation. 
\end{proof}

Antisymmetrizing the fourth relation, we get that 

\begin{corollary} $\forall \varphi \in \Omega^*(M,\xi)^{\TT}$, the following identity holds, 
\be 
\begin{split}
&\gamma_\U\gamma_\V\cdot ((\nabla^{\xi}-\iota_v+H)\varphi)\\
=&(\nabla^{\xi}-\iota_v+H)(\gamma_\U\gamma_\V\cdot \varphi)+\gamma_\V\cdot ((\nabla^{\xi}-\iota_v+H)(\gamma_\U\cdot \varphi))\\
&-\gamma_\U\cdot((\nabla^{\xi}-\iota_v+H)(\gamma_\V\cdot \varphi))+\gamma_{[[\U, \V]]_H}\cdot\varphi.\\
\end{split}
\ee

\end{corollary}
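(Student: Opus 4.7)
The plan is to obtain the Corollary as a direct unpacking of the antisymmetric part (in $\U,\V$) of relation 4 in Theorem \ref{thm:corant-relations}, once $\mathcal{L}^\xi_\U$ has been expressed via relation 2 as the graded anticommutator $\{D,\gamma_\U\}$, where I write $D := \nabla^\xi-\iota_v+H$ for brevity. Both $D$ and $\gamma_\U$ are odd operators, while $\mathcal{L}^\xi_\U$ is even, so the relevant graded (anti)commutators reduce to ordinary ones in the appropriate places.

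Concretely, I would first substitute $\mathcal{L}^\xi_\U = D\gamma_\U + \gamma_\U D$ into $[\mathcal{L}^\xi_\U,\gamma_\V] = \gamma_{\U\circ_H\V}$ to obtain the operator identity
\begin{equation*}
D\gamma_\U\gamma_\V + \gamma_\U D\gamma_\V - \gamma_\V D\gamma_\U - \gamma_\V\gamma_\U D = \gamma_{\U\circ_H\V},
\end{equation*}
and rearrange to isolate $\gamma_\V\gamma_\U D$. The same procedure with $\U$ and $\V$ interchanged produces a companion identity. Subtracting the two and dividing by $2$ yields, on the right-hand side, the term $\gamma_{[[\U,\V]]_H}$ via the relation $\tfrac{1}{2}(\U\circ_H\V-\V\circ_H\U)=[[\U,\V]]_H$ already recorded in the introduction. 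On the left-hand side, I would reassemble $\tfrac{1}{2}(\gamma_\U\gamma_\V - \gamma_\V\gamma_\U)D$ and $\tfrac{1}{2}D(\gamma_\U\gamma_\V-\gamma_\V\gamma_\U)$, using the Clifford relation $\gamma_\U\gamma_\V+\gamma_\V\gamma_\U = 2\langle\U,\V\rangle$ from relation 1 to eliminate the symmetric Clifford combinations. Evaluating the resulting operator identity on $\varphi$ produces the stated formula, with the cross terms $\gamma_\V D\gamma_\U\varphi$ and $-\gamma_\U D\gamma_\V\varphi$ emerging as the two contributions that survive antisymmetrization.

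The main delicate point is the careful bookkeeping of signs coming from the graded commutators of the odd operators $D,\gamma_\U,\gamma_\V$, and the clean absorption of the scalar correction $\langle\U,\V\rangle$ (and its differential, obtained when $D$ passes through a scalar multiplication operator) into the various terms so that only $\gamma_{[[\U,\V]]_H}\varphi$ remains on the right. The hypothesis $\varphi\in\Omega^*(M,\xi)^{\TT}$ is not used in the algebra itself, but it is the natural setting in which $D^2\varphi=0$ (relation 3) holds, making the identity consonant with the Courant-algebroid representation framework developed around Theorem \ref{thm:corant-relations}.
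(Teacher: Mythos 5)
Your strategy---substitute relation~2 of Theorem~\ref{thm:corant-relations} into relation~4, antisymmetrize in $\U,\V$, and clean up the symmetric Clifford combinations with relation~1---is exactly what the paper's one-phrase proof (``antisymmetrizing the fourth relation'') intends, so the route is the same. The genuine gap is precisely the step you defer as ``delicate bookkeeping'': the scalar corrections $\langle\U,\V\rangle$ do \emph{not} absorb so that only $\gamma_{[[\U,\V]]_H}\cdot\varphi$ survives. Writing $D=\nabla^\xi-\iota_v+H$, relation~4 with $\U$ and $\V$ interchanged, expanded via $\mathcal{L}^\xi_\V=D\gamma_\V+\gamma_\V D$ and rearranged, gives the operator identity
\begin{equation*}
\gamma_\U\gamma_\V\cdot(D\varphi)=D(\gamma_\V\gamma_\U\cdot\varphi)+\gamma_\V\cdot(D(\gamma_\U\cdot\varphi))-\gamma_\U\cdot(D(\gamma_\V\cdot\varphi))-\gamma_{\V\circ_H\U}\cdot\varphi,
\end{equation*}
and since $-\V\circ_H\U=[[\U,\V]]_H-d\langle\U,\V\rangle$, the output of your procedure is
\begin{equation*}
\gamma_\U\gamma_\V\cdot(D\varphi)=D(\gamma_\V\gamma_\U\cdot\varphi)+\gamma_\V\cdot(D(\gamma_\U\cdot\varphi))-\gamma_\U\cdot(D(\gamma_\V\cdot\varphi))+\gamma_{[[\U,\V]]_H}\cdot\varphi-d\langle\U,\V\rangle\wedge\varphi.
\end{equation*}
This differs from the Corollary as printed by $D\bigl((\gamma_\U\gamma_\V-\gamma_\V\gamma_\U)\cdot\varphi\bigr)+d\langle\U,\V\rangle\wedge\varphi$ (note the first term on the right has the Clifford factors in the opposite order, and there is a leftover exact term), and this discrepancy does not vanish even for $\TT$-invariant $\varphi$.

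A minimal check: take $M=\RR$ with trivial action, trivial $\xi$, $H=0$ (so $D=d$), $\U=\partial_x$, $\V=dx$, $\varphi=x$. Then the left-hand side of the printed identity is $\iota_{\partial_x}(dx\wedge dx)=0$, while the right-hand side is $d(\iota_{\partial_x}(x\,dx))+0-0+0=dx\neq 0$; the corrected identity above gives $0=0$. So the cancellation you assert in your penultimate paragraph cannot be carried out: either the statement must keep the Dorfman bracket $\gamma_{\V\circ_H\U}$ (no antisymmetrization is actually needed---the identity is just a rearrangement of relation~4), or, if one insists on $\gamma_{[[\U,\V]]_H}$, the first term must be $D(\gamma_\V\gamma_\U\cdot\varphi)$ and the term $-d\langle\U,\V\rangle\wedge\varphi$ must be added. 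Your side remark that the $\TT$-invariance of $\varphi$ plays no role in the algebra is correct, since relations~1, 2 and~4 hold on all of $\Omega^*(M,\xi)$.
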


%-------------------------------------------------------------------------------------------------------------------------------------------------------------------------------------------------------------------------------
\section{T-duality and exotic Hori formulae}

\subsection{Review of T-duality} First we review the results
in  \cite{BEM04a, BEM04b}, where the following situation is studied. We give more details here than the brief review in the introduction.

In \cite{BEM04a, BEM04b}, spacetime $Z$ was compactified in one direction.
More precisely, $Z$
is a principal $\bbT$-bundle over $X$

\begin{equation}\label{eqn:MVBx}
\begin{CD}
\bbT @>>> Z \\
&& @V\pi VV \\
&& X \end{CD}
\end{equation}
 classified up to isomorphism by its first Chern class
{ $c_1(Z)\in H^2(X,\ZZ)$}. Assume that spacetime $Z$ is endowed with an $H$-flux which is
a representative in the
degree 3 Deligne cohomology of $Z$, that is
$H\in\Omega^3(Z)$ with integral periods (for simplicity, we drop factors of $\frac{1}{2\pi i}$),
together with
the following data. Consider a local trivialization $U_\alpha \times \TT$ of $Z\to X$, where
$\{U_\alpha\}$ is a good cover of $X$. Let $H_\alpha = H\Big|_{ U_\alpha \times \TT}
= d B_\alpha$, where $B_\alpha \in \Omega^2(U_\alpha \times \TT)$ and finally, $B_\alpha -B_\beta = F_{\alpha\beta}
\in \Omega^1(U_{\alpha\beta} \times \TT)$.
 Then the choice of $H$-flux entails that we are given a local trivialization
 as above and locally defined 2-forms $B_\alpha$ on it, together with closed 2-forms $F_{\alpha\beta}$ defined on double overlaps,  that is, $(H, B_\alpha, F_{\alpha\beta})$. Also the first Chern class
 of $Z\to X$
  is represented in integral cohomology by $(F, A_\alpha)$ where
$\{A_\alpha\}$ is a connection 1-form on $Z\to X$ and $F = dA_\alpha$ is the curvature 2-form of $\{A_\alpha\}$.

The {  {T-dual}}  is another principal
$\bbT$-bundle over $M$, denoted by $\hat Z$,
  {}
\begin{equation}\label{eqn:MVBy}
\begin{CD}
\hat \bbT @>>> \hat Z \\
&& @V\hat \pi VV     \\
&& X \end{CD}
\end{equation}
% which has first Chern class
% $\,\, c_1(\hat E) = [\pi_* H] $.
To define it, we see that $\pi_* (H_\alpha) = d \pi_*(B_\alpha) = d {\hat A}_\alpha$,
 so that $\{{\hat A}_\alpha\}$ is a connection 1-form whose curvature $ d {\hat A}_\alpha = \hat F_\alpha =  \pi_*(H_\alpha)$
 that is, $\hat F = \pi_* H$. So let $\hat Z$ denote the principal
$\bbT$-bundle over $M$ whose first Chern class is  $\,\, c_1(\hat Z) = [\pi_* H, \pi_*(B_\alpha)] \in H^2(X; \ZZ) $.

The Gysin
sequence \cite{BT} for $Z$ enables us to define a T-dual $H$-flux
$[\hat H]\in H^3(\hat Z,\ZZ)$, satisfying
\begin{equation} \label{eqn:MVBc}
c_1(Z) = \hat \pi_* \hat H \,,
\end{equation}
 where $\pi_* $
%: H^k(E,\ZZ) \to H^{k-1}(M,\ZZ)$,
and similarly $\hat\pi_*$, denote the pushforward maps.
Note that $ \hat H$ is not fixed by this data, since any integer
degree 3 cohomology class on $X$ that is pulled back to $\hat Z$
also satisfies the requirements. However, $ \hat H$ is
determined uniquely (up to cohomology) upon imposing
the condition $[H]=[\hat H]$ on the correspondence space $Z\times_X \hat Z$
as will be explained now.

The {\em correspondence space} (sometimes called the doubled space) is defined as
$$
Z\times_X  \hat Z = \{(x, \hat x) \in Z \times \hat Z: \pi(x)=\hat\pi(\hat x)\}.
$$
Then we have the following commutative diagram,
\begin{equation*} \label{eqn:correspondence}
\xymatrix @=6pc @ur { (Z, [H]) \ar[d]_{\pi} &
(Z\times_X  \hat Z, [H]=[\hat H]) \ar[d]_{\hat p} \ar[l]^{p} \\ X & (\hat Z, [\hat H])\ar[l]^{\hat \pi}}
\end{equation*}
By requiring that
$$
p^*[H]={\hat p}^*[\hat H] \in H^3(Z\times_X  \hat Z, \ZZ),
$$
determines $[\hat H] \in H^3(  \hat Z, \ZZ)$  uniquely, via an application of the Gysin sequence.
An alternate way to see this is is explained below.

%\subsection{Formula for the T-dual flux in Deligne cohomology}\label{subsect:T-dualitydeligne}
Let $(H, B_\alpha, F_{\alpha\beta}, L_{\alpha\beta})$ denote a gerbe with connection on $Z$, cf. \cite{Brylinski}, where 
$(H, B_\alpha, F_{\alpha\beta})$ denotes the Deligne class of the closed integral 3-form $H$ and $ L_{\alpha\beta}$
denotes the line bundles on double overlaps that determines the gerbe.
We also choose a connection 1-form $A$ on $Z$.
Let $v$ denote the vector field generating the $\TT$-action on $Z$.
Then define $\widehat A_\alpha = -\imath_v B_\alpha$ on the chart $U_\alpha$ and
the connection 1-form $\widehat A= \widehat A_\alpha +d\widehat\theta_\alpha$
on the chart $U_\alpha\times  \widehat \TT$. In this way we get a T-dual circle bundle
$\widehat Z \to X$ with connection 1-form $\widehat A$.

Without loss of generality, we can assume that $H$ is $\TT$-invariant. Consider
$$
\Omega = H - A\wedge F_{\widehat A}
$$
where  $F_{\widehat A} = d {\widehat A}$ and $F_{A} = d {A}$ are the curvatures of $A$
and $\widehat A$ respectively. One checks that the contraction $i_v(\Omega)=0$ and
the Lie derivative $L_v(\Omega)=0$ so that $\Omega$ is a basic 3-form on $Z$, that is
$\Omega$ comes from the base $X$.

Setting
$$
\widehat H = F_A\wedge {\widehat A} + \Omega
$$
this defines the T-dual flux 3-form. One verifies that $\widehat H$ is a closed 3-form on $\widehat Z$.
It follows that on the correspondence space, one has as desired,
\begin{equation}
\widehat H = H + d (A\wedge \widehat A ).
\end{equation}

Our next goal is to determine the T-dual curving or B-field.
The Buscher rules imply that on the open sets $U_\alpha \times \TT\times \widehat \TT$ of the
correspondence space $Z\times_X \hat Z$, one has
\begin{equation}
\widehat B_\alpha = B_\alpha + A\wedge \widehat A - d\theta_\alpha \wedge d\widehat \theta _\alpha\,,
\end{equation}
Note that
\begin{equation}
\imath_v \widehat B_\alpha = \imath_v
\left( B_\alpha + A\wedge \widehat A - d\theta_\alpha \wedge d\widehat \theta _\alpha\right) =
-\widehat A_\alpha + \widehat A - d\widehat \theta_\alpha = 0
\end{equation}
so that $\widehat B_\alpha$ is indeed a 2-form on $\widehat Z$ and not just on the correspondence
space. Obviously, $d \widehat B_\alpha = \widehat H$. Following the descent equations one arrives at the complete
T-dual gerbe with connection, $(\widehat H, \widehat B_\alpha, \widehat F_{\alpha\beta}, \widehat L_{\alpha\beta})$.
cf. \cite{BMPR}.

Define the Riemannian metrics on $Z$ and $\hat Z$ respectively by
$$
g=\pi^*g_X+R^2\, A\odot A,\qquad \hat g=\hat\pi^*g_X+1/{R^2} \,\hat A\odot\hat A.
$$
where $g_X$ is a Riemannian metric on $X$.
 Then $g$ is $\TT$-invariant and the length of each circle fibre is $R$; $\hat g$
 is $\hat\TT$-invariant and the length of each circle fibre is $1/R$.

%\subsection{The Hori (or Fourier-Mukai) transform on twisted cohomology}
The rules
for transforming the Ramond-Ramond (RR) fields can be encoded in the {\cite{BEM04a, BEM04b}} generalization of
{\em Hori's formula}
  {}
\begin{equation} \label{eqn:Hori}
T_*G =  \int^{\bbT} e^{ -A \wedge \hat A }\ G \,,
\end{equation}
 where $G \in \Omega^\bullet(Z)^\bbT$ is the total RR field-strength,
\begin{center}
$G\in\Omega^{even}(Z)^\bbT \quad$ for {   { Type IIA}};\\
$G\in\Omega^{odd}(Z)^\bbT \quad$ for {   { Type IIB}},\\
\end{center}
and where the right hand side of equation \eqref{eqn:Hori} is an invariant differential form on $Z\times_X\hat Z$, and
the integration is along the $\bbT$-fiber of $Z$.

Recall that the twisted cohomology
is defined as the cohomology of the complex
\be H^\bullet(Z, H) = H^\bullet(\Omega^\bullet(Z), d_H=d+ H\wedge).\ee
By the identity \eqref{eqn:Hori}, $T_*$ maps $d_H$-closed forms $G$ to $d_{\hat
H}$-closed forms $T_*G$.
 So T-duality $T_*$  induces a map on twisted cohomologies,
$$
T : H^\bullet(Z, H) \to H^{\bullet +1}(\hat Z, \hat H).
$$

%The RR field $G$ is  a $d_H$-closed form if and only if its T-dual $\hat G$
%is a $d_{\hat H}$-closed form.

\subsection{Exotic theories}
Let $\xi, \hat{\xi}$ be the complex line bundle determined by the circle bundles $Z, \hat{Z}$ and the standard representation of the circle on the complex plane. Let $\nabla^\xi$ and $\nabla^{\hat{\xi}}$ be the connections on $\xi, \hat{\xi}$ induced from the connections on $Z, \hat{Z}$ respectively. Let $v, \hat{v}$ be the vertical tangent vector fields on $Z, \hat{Z}$ respectively as in the above section.
\begin{theorem} \label{flat} $\forall n\in \mathbb{Z}$, we have: \newline
on $\Omega^*(Z, \pi^*(\hat{\xi}^{\otimes n}))$, the following identity holds,
\be (\pi^*\nabla^{\hat{\xi}^{\otimes n}}-\iota_{nv}+H)^2+nL_v^{\hat{\xi}^{\otimes n}}=0; \ee
on $\Omega^*(\hat{Z}, \pi^*(\xi^{\otimes n}))$, the following identity holds,
\be (\hat{\pi}^*\nabla^{\xi^{\otimes n}}-\iota_{n\hat{v}}+\hat{H})^2+nL_{\hat{v}}^{\xi^{\otimes n}}=0. \ee
\end{theorem}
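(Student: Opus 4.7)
\textbf{Proof proposal for Theorem \ref{flat}.}

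The plan is a direct computation of the square of the odd operator $D = \pi^*\nabla^{\hat\xi^{\otimes n}} - \iota_{nv} + H\wedge$ on $\Omega^*(Z,\pi^*\hat\xi^{\otimes n})$, where one regards the three summands as operators of degree $+1$, $-1$, and $+3$ respectively. Expanding formally,
\begin{equation*}
D^2 = (\pi^*\nabla^{\hat\xi^{\otimes n}})^2 \;-\; n\{\pi^*\nabla^{\hat\xi^{\otimes n}},\iota_v\} \;+\; \{\pi^*\nabla^{\hat\xi^{\otimes n}},H\wedge\} \;-\; n\{\iota_v,H\wedge\} \;+\; n^2\iota_v^2 \;+\; (H\wedge)^2,
\end{equation*}
and the goal is to identify each term and verify the cancellations.

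First I would handle the three \emph{easy} terms: $\iota_v^2 = 0$ since $\iota_v$ is a derivation of degree $-1$; $(H\wedge)^2 = 0$ because $H$ has odd degree $3$, so $H\wedge H = -H\wedge H$; and $\{\pi^*\nabla^{\hat\xi^{\otimes n}},H\wedge\} = dH\wedge = 0$ because $\pi^*\nabla^{\hat\xi^{\otimes n}}$ acts on the scalar $3$-form $H$ as the ordinary exterior derivative and $H$ is closed. The term $\{\iota_v, H\wedge\}$ is a standard Cartan-type identity: for any $\omega$ one has $\iota_v(H\wedge\omega) + H\wedge\iota_v\omega = (\iota_v H)\wedge\omega$, so this operator is wedge product with $\iota_v H$.

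The heart of the argument lies in matching $(\pi^*\nabla^{\hat\xi^{\otimes n}})^2$ with $n\,\iota_v H\wedge$ so that these contributions cancel, and in identifying the remaining $\{\pi^*\nabla^{\hat\xi^{\otimes n}}, \iota_v\}$ with $L_v^{\hat\xi^{\otimes n}}$. For the first, the curvature of $\nabla^{\hat\xi}$ on $X$ represents $c_1(\hat Z)$ and its pullback to $Z$ is $\hat F$; on the other hand the basic $3$-form $\Omega = H - A\wedge \hat F$ reviewed in Section 2.1 satisfies $\iota_v\Omega = 0$, giving the key identity
\begin{equation*}
\iota_v H \;=\; \hat F \;=\; \pi^*\bigl(\text{curvature of }\nabla^{\hat\xi}\bigr) \quad\text{on } Z,
\end{equation*}
so that $(\pi^*\nabla^{\hat\xi^{\otimes n}})^2 = n\,\iota_v H\wedge$. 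For the second, because $\pi^*\nabla^{\hat\xi^{\otimes n}}$ is pulled back from $X$ along $\pi$ and $v$ is tangent to the fibres of $\pi$, the moment $\mu_v^{\pi^*\hat\xi^{\otimes n}}$ vanishes, and the formula $\{\nabla^\xi,\iota_X\} = L_X^\xi - \mu_X^\xi$ used in the proof of Theorem \ref{thm:corant-relations} collapses to $\{\pi^*\nabla^{\hat\xi^{\otimes n}},\iota_v\} = L_v^{\hat\xi^{\otimes n}}$. Putting everything together yields $D^2 = -nL_v^{\hat\xi^{\otimes n}}$, which is the first claim; the second (hatted) claim is verified by the same computation with the roles of $Z$ and $\hat Z$ interchanged, using the symmetric identity $\iota_{\hat v}\hat H = F$ on $\hat Z$.

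The main conceptual obstacle, rather than the algebra itself, is the identification $\iota_v H = \hat F$ (and its hatted counterpart) that makes the $n\hat F\wedge$ term arising from the curvature cancel exactly against the $n(\iota_v H)\wedge$ term arising from $\{\iota_{nv},H\wedge\}$. This is precisely the T-duality constraint built into the paper's set-up, so once one invokes it the remaining verification is a routine (anti)commutator bookkeeping exercise of the kind already carried out in the proof of Theorem \ref{thm:corant-relations}.
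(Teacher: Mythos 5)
Your proposal is correct and is in substance the same computation as the paper's: the paper works in a local $\TT$-invariant frame $\pi^*\hat s_\alpha$ of $\pi^*\hat\xi^{\otimes n}$, where the operator becomes $d+n\pi^*\hat A_\alpha-\iota_{nv}+H$, and the square vanishes (mod $-nL_v$) for exactly your three reasons: $(d-\iota_{nv})^2+nL_v=0$, $dH=0$ together with $\iota_v\pi^*\hat A_\alpha=0$, and the key cancellation $n\pi^*(d\hat A_\alpha)=n\,\iota_v H$, which is your identity $\hat F=\iota_v H$ coming from $\iota_v\Omega=0$. Your global anticommutator bookkeeping (curvature term versus $\{\iota_{nv},H\wedge\}$, vanishing moment $\mu_v$) is just the frame-free rewriting of that local verification, so there is nothing to add.
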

\begin{proof} Let $\{\hat{s}_\alpha\}$ be local sections of of the line bundle $\hat{\xi}$ such that the connection 1-form corresponding to whom is $\{\hat{A}_\alpha\}.$ It is obvious that $\pi^*\hat{s}_\alpha$'s are invariant about the $\TT$-action on $Z$. Under $\{\pi^*\hat{s}_\alpha\}$, to prove the first equality, we only need to prove that for forms on $U_\alpha\times \TT$, 
\be (d+\pi^*(n\hat{A}_\alpha)-\iota_{nv}+H)^2+ nL_v=0.    \ee
But this is evident, as we have
$$(d-\iota_{nv})^2+nL_v=0, $$
$$dH=0, \ \ \iota_v\pi^*(\hat{A}_\alpha)=0, $$
$$ d\pi^*(n\hat{A}_\alpha)-n\iota_vH=n\pi^*(d\hat{A}_\alpha)-n\iota_vH=0.$$

One can similarly prove the second equality. 
\end{proof}

Denote
$$\mathcal{A}^*(Z)=\bigoplus_{n\in \ZZ} \mathcal{A}_n^*(Z):=\bigoplus_{n\in \ZZ}\Omega^*(Z, \pi^*(\hat{\xi}^{\otimes n}))^{\TT}. $$ 
$$\mathcal{A}^*(\hat Z)=\bigoplus_{n\in \ZZ} \mathcal{A}_n^*(\hat Z):=\bigoplus_{n\in \ZZ}\Omega^*(\hat Z, \hat \pi^*(\xi)^{\otimes n}))^{\hat \TT}. $$
For each $n\in \ZZ$, consider the complexes
$$(\mathcal{A}_n^*(Z), \pi^*\nabla^{\hat{\xi}^{\otimes n}}-\iota_{nv}+H),$$
$$ (\mathcal{A}_n^*(\hat Z), \hat{\pi}^*\nabla^{\xi^{\otimes n}}-\iota_{n\hat{v}}+\hat{H}).$$
We have
\begin{theorem} \label{concentration} If $n\neq 0$, then  
\be  H(\mathcal{A}_n^*(Z), \pi^*\nabla^{\hat{\xi}^{\otimes n}}-\iota_{nv}+H)\cong \{0\},   \ee
\be  H(\mathcal{A}_n^*(\hat Z), \hat{\pi}^*\nabla^{\xi^{\otimes n}}-\iota_{n\hat{v}}+\hat{H})\cong \{0\}.   \ee
\end{theorem}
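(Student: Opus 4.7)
The plan is to produce an explicit contracting homotopy for the differential $D := \pi^*\nabla^{\hat{\xi}^{\otimes n}}-\iota_{nv}+H$ on $\mathcal{A}_n^*(Z)$ when $n\neq 0$; the argument for $\hat Z$ is completely symmetric, using $\hat A$ in place of $A$. The guiding observation is that the $\TT$-invariant connection $1$-form $A$ satisfies $A(v)=1$, so wedging with $A$ inverts (up to a curvature correction) the contraction $\iota_{nv}$ that appears in $D$.

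First I would compute the graded anticommutator $\{D, A\wedge\}$ piece by piece. The Leibniz rule, together with the fact that $A$ is a scalar form, gives $\{\pi^*\nabla^{\hat{\xi}^{\otimes n}}, A\wedge\} = (dA)\wedge = F_A\wedge$. Since $A(v)=1$, a direct calculation yields $\{-\iota_{nv}, A\wedge\} = -n\cdot\mathrm{Id}$. Finally $\{H\wedge, A\wedge\}=0$ because $H$ and $A$ are both odd-degree forms (so $H\wedge A = -A\wedge H$). Altogether,
\begin{equation*}
\{D, A\wedge\} = F_A\wedge - n\cdot\mathrm{Id}.
\end{equation*}

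Next I would verify that $F_A\wedge$ commutes with $D$: the curvature $F_A$ is closed ($dF_A=0$), basic ($\iota_v F_A=0$, since $F_A$ descends to $X$), and of even degree (so it commutes with $H\wedge$). These three facts together imply $[D, F_A\wedge]=0$. Because $F_A$ is a $2$-form on a finite-dimensional manifold, it is nilpotent, so
\begin{equation*}
R := \sum_{k\geq 0}\bigl(F_A/n\bigr)^k
\end{equation*}
is a finite sum, a $\TT$-invariant even-degree operator that commutes with $D$ and satisfies $R\cdot(\mathrm{Id} - F_A/n)=\mathrm{Id}$.

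Finally, setting $K := -\tfrac{1}{n}\,R\,(A\wedge)$, which sends $\mathcal{A}_n^*(Z)$ to itself since $A$ and $F_A$ are both $\TT$-invariant, and using $[D,R]=0$ to pull $R$ through the anticommutator, I obtain
\begin{equation*}
\{D, K\} = R\cdot\bigl\{D, -\tfrac{1}{n}A\wedge\bigr\} = R\cdot\bigl(\mathrm{Id} - F_A/n\bigr) = \mathrm{Id}.
\end{equation*}
Therefore any $D$-closed $\omega\in\mathcal{A}_n^*(Z)$ equals $D(K\omega)$, proving the vanishing of cohomology. I don't expect a real obstacle: the only thing that needs care is the sign and degree bookkeeping in the three anticommutators, and the whole computation reduces to standard Cartan-calculus identities once $A(v)=1$ is exploited. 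The output also produces a concrete homotopy, which is precisely what is needed for the subsequent Corollary \ref{nullcoho}.
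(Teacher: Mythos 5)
Your proof is correct and is essentially the paper's own argument: your homotopy $K=-\tfrac{1}{n}\,\bigl(\sum_{k\ge 0}(F_A/n)^k\bigr)(A\wedge)$ is exactly wedging with the paper's $\eta_n=\frac{A}{F_A-n}$, with the inverse written out as a finite geometric series. The paper verifies $(d-\iota_{nv})\eta_n=1$ directly by the quotient rule rather than via your anticommutator $\{D,A\wedge\}=F_A\wedge-n$, but the two computations are the same Cartan-calculus identity.
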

\begin{proof}  For $n\neq 0$, set $\eta_n=\frac{A}{F_A-n}$, where $A$ is the connection 1-form on $Z$ and $F_A$ is its curvature 2-form. 
Then we have
\be 
\begin{split}
&(d-\iota_{nv})\eta_n\\
=&\frac{[(d-\iota_{nv})A](F_A-n)-A[(d-\iota_{nv})(F_A-n)]}{(F_A-n)^2}\\
=&\frac{(F_A-n)^2}{(F_A-n)^2}\\
=&1.
\end{split}
\ee

We therefore obtain a homotopy for $n\neq 0$:  $\forall x\in \Omega^*(Z, \pi^*(\hat{\xi}^{\otimes n})).$
\be 
\begin{split}
&(\pi^*\nabla^{\hat{\xi}^{\otimes n}}-\iota_{nv}+H)(\eta_nx)+\eta_n[(\pi^*\nabla^{\hat{\xi}^{\otimes n}}-\iota_{nv}+H)x]\\
=&[(d-\iota_{nv})\eta_n]x\\
=&x.
\end{split}
\ee
We can prove the second isomorphism in a verbatim way by using $\hat{A}, F_{\hat{A}}$ to to construct the homotopy.

\end{proof}

\subsection{Twisted integration along the fiber}

In order to define the exotic Hori formula to be introduced in the next subsection, we first introduce a twisted version of integration along the fiber.

Let $\pi: P\to M$ be a principal circle bundle over $M$, and $\Theta$ a connection one form on $P$. Let $L$ be a Hermitian line bundle over $M$ such that $Z$ is the circle bundle of $L$. Let $\nabla^L$ be the connection on $L$ corresponding to $\Theta$. Choose a good cover $\{U_\alpha\}$ on $M$ such that $\pi^{-1}(U_\alpha)\cong U_\alpha\times S^1.$ Let $\{f_\alpha\}$ be a local basis of $L$ corresponding to the constant map $U_\alpha\to \{1\}\subset S^1.$ 

$\forall n\in \ZZ$, define the {\bf twisted integration along the fiber} as follows: for $\omega\in \Omega^*(P)$, 
\be  \int^{P/M, n}\!\!\!\!\!\omega\in \Omega^*(M, L^{\otimes n}), \ \mathrm{such\ that}\   \left.\left(\int^{P/M, n}\!\!\!\!\! \omega\right)\right|_{U_\alpha}=\left(\int^{\pi^{-1}(U_\alpha)/U_\alpha}\omega_\alpha e^{2\pi in\theta_\alpha}\right)\otimes f_\alpha^{\otimes n},   \ee
where $\omega_\alpha=\omega|_{\pi^{-1}(U_\alpha)}$, $\theta_\alpha$ is the vertical coordinates of $\pi^{-1}(U_\alpha).$ Note that as 
on $U_{\alpha\beta}=U_\alpha\cap U_\beta$,  $f_\alpha/f_\beta=e^{2\pi i(\theta_\beta-\theta_\alpha)}$ (a function on $U_{\alpha\beta}$), the above construction patches to be a global section of the bundle $\wedge^*(T^*M)\otimes L^{\otimes n}$. Moreover, it is not hard to see that this definition is independent of choice of the good cover $\{U_\alpha\}$ and local trivializations.

\begin{theorem} \label{interchange} Let $Y$ be a vector field on $M$ and \, $\widetilde{Y}$ a lift of $Y$ on $P$. Let $H$ be a differential form on $M$. Then $\forall n\in \ZZ$
\be(\nabla^{L\otimes n}-\iota_Y+H)\int^{P/M, n}\!\!\!\!\!\omega=-\int^{P/M, n}\!\!\!\!\! (d+n\Theta-\iota_{\widetilde{Y}}+H)\omega.  \ee

\end{theorem}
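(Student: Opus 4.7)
The plan is to prove the identity by a direct local computation in a good cover $\{U_\alpha\}$ of $M$ that simultaneously trivializes both $P$ and $L$, verifying the identity operator-piece by operator-piece, then patching globally.

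First, I would fix conventions in such a trivialization $\pi^{-1}(U_\alpha) \cong U_\alpha \times S^1$ with fiber coordinate $\theta_\alpha$: write $\Theta|_{\pi^{-1}(U_\alpha)} = d\theta_\alpha + \pi^* A_\alpha$ for a local base 1-form $A_\alpha \in \Omega^1(U_\alpha)$, so that the induced connection on $L^{\otimes n}$ acts on the trivializing section by $\nabla^{L^{\otimes n}} f_\alpha^{\otimes n} = (2\pi i n) A_\alpha \otimes f_\alpha^{\otimes n}$. Decompose any given lift as $\widetilde{Y}|_{\pi^{-1}(U_\alpha)} = Y^H + b\,\partial_{\theta_\alpha}$ into horizontal lift plus vertical component (with $b$ a function on $\pi^{-1}(U_\alpha)$).

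Next I would match the operators term by term. For the pair $\nabla^{L^{\otimes n}}$ versus $d + n\Theta$, the crucial identity $d(\omega_\alpha \cdot e^{2\pi i n \theta_\alpha}) = (d\omega_\alpha)\, e^{2\pi i n \theta_\alpha} + 2\pi i n\, d\theta_\alpha \wedge \omega_\alpha \cdot e^{2\pi i n \theta_\alpha}$ reorganizes the derivative of the exponential into the $n d\theta_\alpha$ piece of $n\Theta$; applying Stokes' theorem on the boundaryless 1-dimensional fiber (which, in the paper's convention for integration along the fiber with $d\theta_\alpha$ on the left, produces $d\circ \pi_* = -\pi_*\circ d$) collapses the $d$-piece of the right-hand side into the base-differential of $\int^{P/M,n}\omega$ with the single overall minus sign. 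The remaining $n\pi^*A_\alpha$ piece of $n\Theta$, together with the projection formula for pullback forms, exactly cancels the $(2\pi i n) A_\alpha$ arising from $\nabla^{L^{\otimes n}}$ on $f_\alpha^{\otimes n}$. The $H$-piece, being a form on $M$, passes through fiber integration by the same projection formula.

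For the contraction pair $-\iota_Y$ versus $-\iota_{\widetilde{Y}}$, I would check that $\int^{P/M,n}\iota_{\widetilde{Y}}\omega = \iota_Y\int^{P/M,n}\omega$ for any lift. The horizontal part $Y^H$ acts in base directions and commutes with fiber integration and with multiplication by $e^{2\pi i n \theta_\alpha}$ trivially; the vertical component $b\,\partial_{\theta_\alpha}$ contracts the $d\theta_\alpha$ out of $\omega$, producing a form devoid of the fiber volume element, hence killed by fiber integration. This lift-independence shows the right-hand side of the identity is intrinsically defined and matches the left-hand side term from $\iota_Y$.

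Finally, I would verify the patching. On overlaps $U_{\alpha\beta}$ the fiber coordinate transforms as $\theta_\alpha = \theta_\beta + g_{\alpha\beta}$ for the transition cocycle $g_{\alpha\beta}$, giving $e^{2\pi i n \theta_\alpha} = e^{2\pi i n g_{\alpha\beta}} e^{2\pi i n \theta_\beta}$, which is exactly compensated by the line-bundle transition $f_\alpha^{\otimes n} = e^{-2\pi i n g_{\alpha\beta}} f_\beta^{\otimes n}$; hence the local outputs glue into a global section of $\Lambda^* T^*M \otimes L^{\otimes n}$ and the local identities become the global identity stated. The main obstacle is purely sign bookkeeping: the overall minus sign on the right-hand side must be tracked consistently through Stokes on the 1-dimensional fiber, the projection formula with $|A_\alpha|=1$ and $|H|$ odd, and the graded commutation relations used in extracting the $d\theta_\alpha$ coefficient, so that only the one net minus sign survives in the final identity.
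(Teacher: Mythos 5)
Your approach is essentially the paper's: reduce the identity to three operator-by-operator comparisons, carry out the main one ($\nabla^{L^{\otimes n}}$ versus $d+n\Theta$) by the local computation in which $d$ applied to the weight $e^{2\pi i n\theta_\alpha}$ produces the $n\,d\theta_\alpha$ part of $n\Theta$ while the remaining $n\Theta_\alpha$ is absorbed by the connection acting on $f_\alpha^{\otimes n}$, and patch via the cocycle relation $f_\alpha/f_\beta=e^{2\pi i(\theta_\beta-\theta_\alpha)}$. The paper simply asserts the $H$- and $\iota$-identities that you propose to verify in detail, so your extra work on lift-independence is a useful supplement rather than a deviation.

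One sign in your write-up is wrong, though, and it matters for the statement. Under the convention you correctly identify for the $d$-piece (fiber coordinate on the left, so $d\circ\int^{P/M,n}=-\int^{P/M,n}\circ\, d$), \emph{every} odd operator anticommutes with integration over the one-dimensional fiber: the correct identities are $\iota_Y\int^{P/M,n}\omega=-\int^{P/M,n}\iota_{\widetilde{Y}}\omega$ and $H\cdot\int^{P/M,n}\omega=(-1)^{|H|}\int^{P/M,n}H\cdot\omega$ (so $H$ must have odd degree, as it does in the application and as you note). Your stated identity $\int^{P/M,n}\iota_{\widetilde{Y}}\omega=\iota_Y\int^{P/M,n}\omega$ has the opposite sign: testing on $\omega=\pi^*\alpha\wedge\Theta$, the left side is $(-1)^{|\alpha|-1}\iota_Y\alpha$ while the right side is $(-1)^{|\alpha|}\iota_Y\alpha$. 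With commutation rather than anticommutation, the $-\iota_Y$ term on the left of the theorem would not match $-\int^{P/M,n}(-\iota_{\widetilde{Y}})\omega$, and the three terms could not be collected under the single overall minus sign. Everything else --- the lift-independence of the contraction (the vertical part of $\widetilde{Y}$ removes the fiber volume element and is killed by $\int^{P/M,n}$), the cancellation of $nA_\alpha$ against $\nabla^{L^{\otimes n}}f_\alpha^{\otimes n}$, and the gluing --- is correct and consistent with the paper's proof.
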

\begin{proof} For the definition of the usual integration along the fiber, we refer to \cite{BGV}. It is not hard to see that 
\be H\cdot  \intomega=-\int^{P/M, n}\!\!\!H\cdot \omega,\ee

\be \iota_Y\intomega=-\int^{P/M, n}\!\!\!\iota_{\widetilde{Y}}\omega.    \ee

One only needs to prove 
\be\nabla^{L\otimes n}\left(\intomega\right)=-\int^{P/M, n}\!\!\!(d+n\Theta)\omega.\ee
In fact, suppose $\Theta_\alpha$ is the connection 1-form under the local basis $f_\alpha$, we have
\be
\begin{split}
&\left.\left(\nabla^{L\otimes n}\left(\intomega\right)\right)\right|_{U\alpha}\\
=&\left[d\int^{\pi^{-1}(U_\alpha)/U_\alpha}\omega_\alpha e^{2\pi in\theta_\alpha}+(-1)^{\mathrm{deg}{\omega}-1}\left(\int^{\pi^{-1}(U_\alpha)/U_\alpha}\omega_\alpha e^{2\pi in\theta_\alpha}\right)n\Theta_\alpha\right]\otimes f_\alpha^{\otimes n}\\
=&\left[d\int^{\pi^{-1}(U_\alpha)/U_\alpha}\omega_\alpha e^{2\pi in\theta_\alpha}-\int^{\pi^{-1}(U_\alpha)/U_\alpha}n\Theta_\alpha \omega_\alpha e^{2\pi in\theta_\alpha}\right]\otimes f_\alpha^{\otimes n}\\
=&\left[-\int^{\pi^{-1}(U_\alpha)/U_\alpha}d(\omega_\alpha e^{2\pi in\theta_\alpha})-\int^{\pi^{-1}(U_\alpha)/U_\alpha}n\Theta_\alpha \omega_\alpha e^{2\pi in\theta_\alpha}\right]\otimes f_\alpha^{\otimes n}\\
=&\left[-\int^{\pi^{-1}(U_\alpha)/U_\alpha}\left((d\omega_\alpha) e^{2\pi in\theta_\alpha}+(-1)^{\mathrm{deg}\omega}\omega_\alpha e^{2\pi in\theta_\alpha}(2\pi i nd\theta_\alpha)\right)\right.\\
&\left.\ \ -\int^{\pi^{-1}(U_\alpha)/U_\alpha}n\Theta_\alpha \omega_\alpha e^{2\pi in\theta_\alpha}\right]\otimes f_\alpha^{\otimes n}\\
=&-\left[\int^{\pi^{-1}(U_\alpha)/U_\alpha}[(d\omega_\alpha+n\Theta_\alpha+n2\pi id\theta_\alpha)\omega_\alpha] e^{2\pi in\theta_\alpha}\right]\otimes f_\alpha^{\otimes n}\\
=&-\left.\left(\int^{P/M, n} \!\!\!\!(d+n\Theta)\omega\right)\right|_{U\alpha}.\\
\end{split}
\ee
The desired equality follows.

\end{proof}

\subsection{The exotic Hori formulae}

Let us go back to the T-duality with same notions as in Section 2.1, 2.2.

Let $X+\alpha\in \Gamma(TZ\oplus T^*Z)^{\TT}$. Then one can write
\be X=x+fv, \ \alpha=\theta+gA,\ee
where $x\in \Gamma(TM), \theta\in \Omega^1(M), f, g\in C^\infty(M)$. Define 
$$\phi(X, \alpha)=(x+gv)+(\theta+fA). $$

Recall that
\be \Omega^*_{-n}(Z)=\{\omega \in \Omega^*(Z)| L_{v}\omega=-n\omega\}.\ee
Note that $\Omega^*_{-n}(Z)=\Omega^*(Z)^\TT$, i.e. the $\TT$-invariant forms on $Z$. 

Let $\omega_{-n}\in \Omega^*_{-n}(Z).$ Define the {\bf exotic Hori formula} by
\be\label{exotic-Hori} \tau_n(\omega_{-n})= \int^{\TT, n}\omega_{-n}\, e^{-A\wedge\hat{A}}\in \Omega^*(\hat Z, \hat\pi^*(\xi)^{\otimes n})),\ee
where $\int^{\TT, n}$ stands for $\int^{(Z\times_X \hat{Z})/\hat{Z}, n}$ for simplicity. 
\begin{remark}(i) Let $\{s_\alpha\}$ be local sections of of the line bundle $\xi$ and $\theta_\alpha$ be the vertical coordinate function on $\pi^{-1}(U_\alpha)$. Then, locally, $\omega_{-n}$ must be of the form
$$(\omega_{-n, \alpha, 0}+ \omega_{-n, \alpha, 1}(d\theta_\alpha+A_\alpha))e^{-2\pi in\theta_\alpha},$$ where $\omega_{-n, \alpha, 0}$ and  $\omega_{-n, \alpha, 1}$ are both forms on $U_\alpha$. So if $m\neq n$, 
\be 
\begin{split}
&\left.\int^{\TT, m}\omega_{-n}\, e^{-A\wedge\hat{A}}\ \right|_{U_\alpha}\\
=&\left( \int^{\TT}(\omega_{-n, \alpha, 0}+ \omega_{-n, \alpha, 1}(d\theta_\alpha+A_\alpha))(1-(d\theta_\alpha+A_\alpha)\wedge\hat{A})e^{-2\pi in\theta_\alpha}\cdot e^{2\pi i m\theta_\alpha}\right)\otimes\hat\pi^*(s_\alpha)^{\otimes n}\\
=&0.
\end{split}
\ee
This explains why we only define $\tau_m(\omega_{-n})$ for $m=n$.

(ii) When $n=0$, $\tau_0$ is just the Hori formula (\ref{eqAh}) in {\cite{BEM04a, BEM04b}}.
\end{remark}

Denote by $\rho$ the tautological global section of the line bundle $(\hat\pi\circ \hat p)^*\xi$ on $Z\times_X \hat{Z}$. Let $\hat \theta_n \in \Omega^*(\hat Z, \hat\pi^*(\xi)^{\otimes n})^{\hat \TT}$. Define the {\bf inverse exotic Hori formula} by
\be \hat{\sigma}_n(\hat \theta_{n})= \int^{\hat\TT}\hat p^*(\hat \theta_n) \cdot (\rho^{-1})^{\otimes n} \cdot e^{A\wedge\hat{A}}\in \Omega^*(Z). \ee

One can dually define the exotic Hori formula $\hat\tau_n$ from $\hat Z$ to $Z$ and the inverse exotic Hori formula $\sigma_n$ from $Z$ to $\hat Z$. Let $\hat \omega_{-n}\in \Omega^*_{-n}(\hat Z).$ Define 
\be \hat \tau_n(\hat \omega_{-n})= \int^{\hat \TT, n}\hat\omega_{-n}\, e^{A\wedge\hat{A}}\in \Omega^*(Z, \pi^*(\hat\xi)^{\otimes n})).\ee Denote by $\hat\rho$ the tautological global section of the line bundle $(\pi\circ p)^*\hat\xi$ on $Z\times_X \hat{Z}$. Let $\theta_n \in \Omega^*(Z, \pi^*(\hat\xi)^{\otimes n})^{\TT}$. Define 
\be \label{exotic-Hori2}\sigma_n(\theta_{n})= \int^{\TT}p^*(\theta_n) \cdot (\hat\rho^{-1})^{\otimes n} \cdot e^{-A\wedge\hat{A}}\in \Omega^*(\hat Z). \ee

We have the following results:

\begin{theorem}\label{main}
(1) $\phi$ is orthogonal with respect to the pairing on $TZ\oplus T^*Z$, hence induces an isomorphism on Clifford algebras.\newline
(2) $\phi$ preserves the twisted Courant bracket.\newline
(3) $\tau_n(\omega_{-n})\in  \Omega^*(\hat{Z}, \hat{\pi}^*(\xi^{\otimes n}))^{\hat \TT}$ and $\hat{\sigma}_n(\theta_{n})\in \Omega^*_{-n}(Z).$ \, For $\mathcal{U}\in\Gamma(TZ\oplus T^*Z)^{\TT}$, we have $\tau_n(\gamma_{\mathcal{U}}\cdot \{\omega_{-n}\})=\gamma_{\phi(\mathcal{U})}\cdot \tau_n(\{\omega_{-n}\})$, for all $\{\omega_{-n}\}\in \Omega^*_{-n}(Z),$ hence induces an isomorphism of Clifford modules 
$$\tau_n: \Omega^*_{-n}(Z)\to \Omega^*(\hat{Z}, \hat{\pi}^*(\xi^{\otimes n}))^{\hat \TT}.$$
$\hat \sigma_n=-\tau_n^{-1}$ and is an isomorphism of Clifford modules $$\hat \sigma_n: \Omega^*(\hat{Z}, \hat{\pi}^*(\xi^{\otimes n}))^{\hat \TT}\to \Omega^*_{-n}(Z).$$ The dual results for $\hat\tau_n$ and $\sigma_n$ are also true. \newline
(4) The map $\tau_n$ induces a chain map on the complexes 
$$(\Omega^*_{-n}(Z), d+H)\to (\Omega^*(\hat{Z}, \hat{\pi}^*(\xi^{\otimes n}))^{\hat \TT},-(\hat{\pi}^*\nabla^{\xi^{\otimes n}}-\iota_{n\hat{v}}+\hat{H}))$$ and the map $\hat \sigma_n$ induces induces a chain map on the complexes 
$$(\Omega^*(\hat{Z}, \hat{\pi}^*(\xi^{\otimes n})),\hat{\pi}^*\nabla^{\xi^{\otimes n}}-\iota_{n\hat{v}}+\hat{H})^{\hat \TT}\to(\Omega^*_{-n}(Z), -(d+H)).$$ The dual results for $\hat\tau_n$ and $\sigma_n$ are also true. 
\end{theorem}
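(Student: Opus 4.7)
The plan is to address the four parts in the order stated, since the later statements build on the earlier ones. For Part (1), I will expand the pairing directly: writing $\U=(x+fv)+(\theta+gA)$ and $\V=(y+f'v)+(\theta'+g'A)$ with $x,y$ horizontal and $\theta,\theta'$ basic, the connection property $\iota_v A=1$ together with $\iota_v\theta=\iota_x A=0$ yields $2\langle\U,\V\rangle=\iota_x\theta'+\iota_y\theta+fg'+f'g$. Repeating the computation for $\phi(\U)=(x+gv)+(\theta+fA)$ and $\phi(\V)$ gives the same expression, since the swap $f\leftrightarrow g$ is symmetric across the two entries. Functoriality of the Clifford algebra construction then produces the induced isomorphism. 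For Part (2), I substitute $\phi(\U),\phi(\V)$ into the Dorfman bracket $\U\circ_H\V=[X,Y]+L_X\beta-\iota_Y d\alpha+\iota_X\iota_Y H$ and compare with $\phi(\U\circ_H\V)$; the computation is mechanical once one notes that on invariant sections all coefficient functions are basic (so $L_v f=0$, etc.) and uses $dA=F_A$, $d\hat A=F_{\hat A}$, $\iota_v A=1$, $\iota_v\hat A=0$. The required equality reduces to a finite list of term-by-term matchings.

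For Part (3), the containment statements are immediate from the definition of twisted fiber integration in Section 2.3. The crucial observation for the Clifford intertwining is that on the correspondence space $A\wedge A=0$ forces $(A\wedge\hat A)^2=0$, so $e^{-A\wedge\hat A}=1-A\wedge\hat A$ and consequently $\iota_v(e^{-A\wedge\hat A})=-\hat A$ while $\iota_{\hat v}(e^{-A\wedge\hat A})=A$. These identities implement the exchange: interior contraction along the vertical of $Z$ becomes wedging by $\hat A$ after pushforward, and wedging by $A$ becomes interior contraction along the vertical of $\hat Z$. Combined with the standard fact that basic $\alpha\wedge$ and $\iota_x$ pass through the pushforward, this proves $\tau_n(\gamma_\U\omega)=\gamma_{\phi(\U)}\tau_n(\omega)$. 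The inverse identity $\hat\sigma_n\circ\tau_n=-\mathrm{Id}$ follows by computing the iterated integral over the correspondence space via Fubini: the two exponentials $e^{-A\wedge\hat A}$ and $e^{A\wedge\hat A}$ multiply to $1$, the tautological section $\rho^{-1}$ trivializes the $\xi^{\otimes n}$ twist after integration along $\hat\TT$, and what remains is $-\omega_{-n}$, with the sign coming from the fiber integration convention built into Theorem \ref{interchange}. Bijectivity of $\tau_n$ follows, and the dual Clifford-module results are obtained by interchanging the roles of $Z$ and $\hat Z$.

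For Part (4), the central tool is Theorem \ref{interchange} applied to the $\TT$-bundle $p\colon Z\times_X\hat Z\to\hat Z$ with connection $p^*A$, line bundle $\hat\pi^*\xi$, and lifted vertical vector field $\widetilde{\hat v}$. The theorem converts the pushforward derivative into
\[
(\hat\pi^*\nabla^{\xi^{\otimes n}}-\iota_{n\hat v}+\hat H)\,\tau_n(\omega_{-n})
=-\int^{\TT,n}\bigl(d+n A\wedge -\iota_{n\widetilde{\hat v}}+\hat H\bigr)\bigl(\omega_{-n}\cdot e^{-A\wedge\hat A}\bigr).
\]
Expanding the integrand by Leibniz and using $d(e^{-A\wedge\hat A})=-(F_A\wedge\hat A-A\wedge F_{\hat A})\,e^{-A\wedge\hat A}$, the identities $\iota_{\widetilde{\hat v}}(e^{-A\wedge\hat A})=A$, together with the T-duality relation $\hat H-H=d(A\wedge\hat A)=F_A\wedge\hat A-A\wedge F_{\hat A}$ on the correspondence space, the correction terms cancel in pairs and leave $-\int^{\TT,n}(d+H)\omega_{-n}\cdot e^{-A\wedge\hat A}=-\tau_n((d+H)\omega_{-n})$. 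The statement for $\hat\sigma_n$ is obtained by applying Theorem \ref{interchange} to $\hat p$, and the dual statements for $\hat\tau_n,\sigma_n$ follow by the symmetric argument interchanging $Z$ and $\hat Z$.

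The hard part is expected to be Part (4): the interplay among $d$, the connection $\nabla^{\xi^{\otimes n}}$, the flux $H$ and its T-dual $\hat H$ on the correspondence space demands careful sign and Leibniz bookkeeping, and it is precisely the T-duality identity $\hat H-H=d(A\wedge\hat A)$ that makes the cancellation of correction terms actually work. The Clifford-module intertwining in Part (3) is a close second, but the collapse $e^{-A\wedge\hat A}=1-A\wedge\hat A$ keeps it tractable.
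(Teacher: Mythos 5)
Your proposal is correct and follows essentially the same route as the paper: parts (1)--(2) are quoted/checked directly, part (3) is an explicit computation of the twisted fiber integrals using $e^{-A\wedge\hat A}=1-A\wedge\hat A$ and the cancellation of the phases $e^{\pm 2\pi i n\theta_\alpha}$, and part (4) rests on Theorem \ref{interchange} together with $\hat H-H=d(A\wedge\hat A)$ and the vanishing of $(nA-\iota_{n\hat v})$ applied to $\omega_{-n}e^{-A\wedge\hat A}$. The only difference is organizational: the paper folds the Leibniz bookkeeping of your part (4) into the single identity $[(d+H)\omega_{-n}]e^{-A\wedge\hat A}=(d+\hat H)(\omega_{-n}e^{-A\wedge\hat A})$ before inserting the harmless term $nA-\iota_{n\hat v}$, which is the same cancellation you describe.
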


(1), (2) in the above theorem as well as (3) and (4) without the presence of the line bundles $\xi, \hat{\xi}$ are existing results (\cite{Gua03, Gua07}, c.f. \cite{B10}).

\begin{proof} We will prove (3) and (4). 

%To prove (3), let $\mathcal{U}=X+\alpha=(x+fv)+(\theta+gA)$, as in above. 

Let $\{s_\alpha\}$ be local sections of of the line bundle $\xi$ and $\{\hat{s}_\alpha\}$ local sections of of the line bundle $\hat{\xi}$. Let $\theta_\alpha$ be the vertical coordinate function of $\pi^{-1}(U_\alpha)$ and $\hat{\theta}_\alpha$ the vertical coordinate function of $\hat{\pi}^{-1}(U_\alpha)$. 

Take $\{\omega_n\}\in \Omega^*_{-n}(Z)$. Locally, $\omega_{-n}$ is of the form
$$(\omega_{-n, \alpha, 0}+ \omega_{-n, \alpha, 1}A)e^{-2\pi in\theta_\alpha},$$ where $\omega_{-n, \alpha, 0}$ and  $\omega_{-n, \alpha, 1}$ are both forms on $U_\alpha$. Then 
\be \label{tau}
\begin{split}
&\left.\tau_n(\omega_{-n})\,\right|_{U_\alpha}\\
=&\left.\int^{\TT, n}\omega_n\, e^{-A\wedge\hat{A}}\, \right|_{U_\alpha}\\
=&\left( \int^{\TT}(\omega_{-n, \alpha, 0}+ \omega_{-n, \alpha, 1}A)e^{-A\wedge\hat{A}}e^{-2\pi in\theta_\alpha}\cdot e^{2\pi i n\theta_\alpha}\right)\otimes\hat\pi^*(s_\alpha)^{\otimes n}\\
=&\left( \int^{\TT}(\omega_{-n, \alpha, 0}+ \omega_{-n, \alpha, 1}A)e^{-A\wedge\hat{A}}\right)\otimes\hat\pi^*(s_\alpha)^{\otimes n},
\end{split}
\ee
which is equal to
$$(-\omega_{-n, \alpha, 0}\hat A-\omega_{-n, \alpha, 1})\otimes\hat\pi^*(s_\alpha)^{\otimes n}$$
if $\omega_{-n}$ is of even degree; or
$$(\omega_{-n, \alpha, 0}\hat A+\omega_{-n, \alpha, 1})\otimes\hat\pi^*(s_\alpha)^{\otimes n}$$
if $\omega_{-n}$ is of odd degree. So we see that
$$ \tau_n(\omega_{-n}) \in  \Omega^*(\hat{Z}, \hat{\pi}^*(\xi^{\otimes n}))^{\hat \TT}. $$

On the other hand, if $\hat \theta_n\in \Omega^*(\hat Z, \hat\pi^*(\xi)^{\otimes n}))^{\hat \TT}$, suppose $\hat\theta_n$ is locally equal to
$$(\hat\eta_{n, \alpha, 0}\hat A+\hat\eta_{n, \alpha, 1})\otimes\hat\pi^*(s_\alpha)^{\otimes n}, $$
where $\hat\eta_{n, \alpha, 0}, \hat\eta_{n, \alpha, 1}$ are forms on $U_\alpha$.
then $\hat{\sigma}_n(\hat\theta_{n})$ is locally equal to 
\be \label{T hat}
\left.\int^{\hat\TT}\hat p^*(\hat\theta_n) \cdot (\rho^{-1})^{\otimes n} \cdot e^{A\wedge\hat{A}}\, \right|_{U_\alpha}=\left(\int^{\hat\TT}(\hat\eta_{n, \alpha, 0}\hat A+\hat\eta_{n, \alpha, 1}) \cdot e^{A\wedge\hat{A}}\right)\cdot e^{-2\pi in \theta_\alpha}, 
\ee which is equal to
$$(\hat\eta_{n, \alpha, 0}+\hat\eta_{n, \alpha, 1}A)\cdot e^{-2\pi in \theta_\alpha} $$if $\hat\theta_n$ is of even degree; or
$$ -(\hat\eta_{n, \alpha, 0}+\hat\eta_{n, \alpha, 1}A)\cdot e^{-2\pi in \theta_\alpha} $$if $\hat\theta_n$ is of odd degree.
And so evidently $L_v \hat{\sigma}_n(\hat\theta_{n})=-n\hat{\sigma}_n(\hat\theta_{n})$. We therefore have
$$ \hat{\sigma}_n(\hat\theta_{n})\in \Omega^*_{-n}(Z). $$

From the above expressions (\ref{tau}), (\ref{T hat}) and local nature of (3), we see from the original Hori formula that $\tau_n, \hat \sigma_n$ both respect the Clifford actions and 
\be \hat\sigma_n=-\tau_n^{-1}. \ee

We next prove (4). We have 
\be [(d+H)\omega_{-n}]e^{-A\wedge\hat{A}}=[d+H-(H-\hat H)](\omega_{-n}e^{-A\wedge\hat{A}})=(d+\hat H)(\omega_{-n}e^{-A\wedge\hat{A}}). \ee
Also one has
\be (nA-\iota_{n\hat{v}})(\omega_{-n}e^{-A\wedge\hat{A}})=nA\omega_{-n}e^{-A\wedge\hat{A}}-\iota_{n\hat{v}}(\omega_{-n}e^{-A\wedge\hat{A}})=(-1)^{|\omega_{-n}|}\omega_{-n}(nA-nA)=0.  \ee

By Theorem \ref{interchange}, we have
\be
\begin{split}
&\tau_n((d+H)\omega_{-n})\\
=&\int^{\TT, n}[(d+H)\omega_{-n}]\cdot e^{-A\wedge\hat{A}}\\
=&\int^{\TT, n}(d+\hat H)(\omega_{-n}\cdot e^{-A\wedge\hat{A}})\\
=&\int^{\TT, n}(d+nA-\iota_{n\hat{v}}+\hat H)(\omega_{-n}\cdot e^{-A\wedge\hat{A}})\\
=&-(\hat{\pi}^*\nabla^{\xi^{\otimes n}}-\iota_{n\hat{v}}+\hat{H})\int^{\TT, n} \omega_{-n}\cdot e^{-A\wedge\hat{A}} \\
=&-(\hat{\pi}^*\nabla^{\xi^{\otimes n}}-\iota_{n\hat{v}}+\hat{H})\tau_n(\omega_{-n}).
\end{split}
\ee
As $-\hat \sigma_n$ is the inverse of $\tau_n$ , one deduces that $\hat \sigma_n$ is also a chain map. 

\end{proof}

Combining Theorem \ref{concentration}, we have 
\begin{corollary}\label{nullcoho} \label{nullcoho} If $n\neq 0$, then
\be H(\Omega_{-n}^*(Z), d+H)=0,\ee
\be H(\Omega_{-n}^*(\hat Z), d+\hat H)=0.\ee
\end{corollary}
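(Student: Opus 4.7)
The plan is to transport the explicit contracting homotopy of Theorem \ref{concentration} across the exotic Fourier--Mukai isomorphism $\tau_n$ of Theorem \ref{main}. Setting $\hat D := \hat\pi^*\nabla^{\xi^{\otimes n}} - \iota_{n\hat v} + \hat H$, Theorem \ref{main}(4) together with Theorem \ref{main}(3) tells us that $\tau_n \colon (\Omega^*_{-n}(Z), d+H) \to (\mathcal{A}^*_n(\hat Z), -\hat D)$ is a chain isomorphism with inverse $-\hat\sigma_n$, while Theorem \ref{concentration} constructs an explicit contracting homotopy for the target complex. Pulling that homotopy back by $\tau_n$ will produce a contracting homotopy for $(\Omega^*_{-n}(Z), d+H)$, giving the first vanishing; the second will follow by symmetry.

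More concretely, for $n \ne 0$, I would set $\hat\eta_n := \hat A/(F_{\hat A} - n)$, which is well defined because the scalar part $-n$ of $F_{\hat A} - n$ is invertible, and observe that $\hat\eta_n$ has only odd-degree components. The proof of Theorem \ref{concentration} then gives the identity
\[
\hat D(\hat\eta_n \wedge y) + \hat\eta_n \wedge \hat D y = y, \qquad \text{for all } y \in \mathcal{A}^*_n(\hat Z).
\]
Define $K \colon \Omega^*_{-n}(Z) \to \Omega^*_{-n}(Z)$ by
\[
K(\omega) := \tau_n^{-1}\bigl(\hat\eta_n \wedge \tau_n(\omega)\bigr) = -\hat\sigma_n\bigl(\hat\eta_n \wedge \tau_n(\omega)\bigr).
\]
Using the intertwining relation $\tau_n \circ (d+H) = -\hat D \circ \tau_n$, conjugating the above homotopy identity by $\tau_n$ yields
\[
(d+H)\, K + K\,(d+H) = -\mathrm{Id}_{\Omega^*_{-n}(Z)}.
\]
Hence every $(d+H)$-closed $\omega$ is $(d+H)$-exact, giving $H(\Omega^*_{-n}(Z), d+H) = 0$. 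The dual identity for $(\Omega^*_{-n}(\hat Z), d+\hat H)$ is obtained by the same argument with $\hat\tau_n$ and $\eta_n := A/(F_A - n)$ playing the roles of $\tau_n$ and $\hat\eta_n$.

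The only technical point requiring care is sign bookkeeping: the odd degree of $\hat\eta_n$ together with the minus sign in $\tau_n \circ (d+H) = -\hat D \circ \tau_n$ combine to produce $-\mathrm{Id}$ rather than $+\mathrm{Id}$ on the right-hand side, but this overall sign is inconsequential for acyclicity. Since the two ingredients, Theorem \ref{concentration} and Theorem \ref{main}, are already in hand, I do not expect a genuine obstacle in pushing the argument through.
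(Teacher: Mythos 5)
Your proposal is correct and matches the paper's own argument: the paper likewise deduces the vanishing by transporting the contracting homotopy of Theorem \ref{concentration} through the chain isomorphism $\tau_n$ (with inverse $-\hat\sigma_n$), recording the explicit identity $(d+H)\hat\sigma_n[\hat\eta_n\cdot \tau_n(\omega_{-n})]+\hat \sigma_n(\hat\eta_n\cdot \tau_n[(d+H)\omega_{-n}])=\omega_{-n}$ with $\hat\eta_n=\hat A/(F_{\hat A}-n)$. Your sign bookkeeping is consistent with this (your $K$ is the negative of the paper's homotopy operator), so there is nothing to add.
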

Actually, from the proof of Theorem \ref{concentration}, we have the following homotopy 
\be \label{homotopy} (d+H)\hat\sigma_n[\hat\eta_n\cdot \tau_n(\omega_{-n})]+\hat \sigma_n(\hat\eta_n\cdot \tau_n[(d+H)\omega_{-n}])=\omega_{-n},\ee
where $\omega_{-n}\in \Omega_{-n}^*(Z)$ and $\hat\eta_n=\frac{\hat A}{F_{\hat A}-n}$. It is interesting to see that to construct this homotopy on $(\Omega_{-n}^*(Z), d+H)$, one uses the data on the dual side $\hat{Z}$. The interested readers may write a homotopy for the dual case.

For a general form $\omega\in \Omega^*(Z)$, one can perform the {\bf family Fourier expansion} as follows. 
Suppose locally on $U_\alpha$, 
\be \omega=\sum_I f_I(x, \theta_\alpha)dx_I+\sum_Jg_J(x, \theta_\alpha)dx_J\wedge A.\ee
Consider the local form
$$\sum_I \left(\int^{\TT}f_I(x, \theta_\alpha )e^{2\pi in\theta_\alpha}d\theta_\alpha\right)dx_I+\sum_J\left(\int^{\TT}g_J(x, \theta_\alpha )e^{2\pi in\theta_\alpha}d\theta_\alpha\right)dx_J\wedge A.$$
As $e^{2\pi in(\theta_\alpha-\theta_\beta)}$ is a function on the base, the form
$$\left[\sum_I \left(\int^{\TT}f_I(x, \theta_\alpha )e^{2\pi in\theta_\alpha}d\theta_\alpha\right)dx_I+\sum_J\left(\int^{\TT}g_J(x, \theta_\alpha )e^{2\pi in\theta_\alpha}d\theta_\alpha\right)dx_J\wedge A\right] e^{-2\pi in\theta_\alpha}$$ for each $\alpha$ glue together to be a global form on $Z$. Denote this form by $\omega_{-n}$. Since $\left(\int^{\TT}f_I(x, \theta_\alpha )e^{2\pi in\theta_\alpha}d\theta_\alpha\right)dx_I$, $\left(\int^{\TT}g_J(x, \theta_\alpha )e^{2\pi in\theta_\alpha}d\theta_\alpha\right)dx_J$ as well as $A$ are all $\TT$-invariant, we see that $L_{v}\omega_{-n}=-n\omega_{-n}$. By the Fourier expansion, $\omega=\sum_{n=-\infty}^\infty \omega_{-n}.$ 

{ Let ${s_\alpha}$ be the local basis of the bundle $\xi$. Then 
\be \left[\sum_I \left(\int^{\TT}f_I(x, \theta_\alpha )e^{2\pi in\theta_\alpha}d\theta_\alpha\right)dx_I+\sum_J\left(\int^{\TT}g_J(x, \theta_\alpha )e^{2\pi in\theta_\alpha}d\theta_\alpha\right)dx_J\wedge A\right]\otimes {\pi^*s_\alpha}^{\otimes n} \ee
patch together to be an element $\Omega_{-n}\in \Omega^*(Z, \pi^*\xi^{\otimes n})$. Let $\gamma$ be the tautological global section of the bundle $\pi^*\xi$ over $Z$. Then 
\be \omega=\sum_{n=-\infty}^\infty \omega_{-n}=\sum_{n=-\infty}^\infty \Omega_{-n}\otimes (\gamma^{-1})^{\otimes n}.\ee
We call $\Omega_{-n}$'s the {\bf family Fourier coefficients} of $\omega$. 

The above theorem shows that 
$$\tau_n(\omega_{-n})=\tau_n(\Omega_{-n}\otimes (\gamma^{-1})^{\otimes n})\in \Omega^*(\hat Z, \hat\pi^*(\xi)^{\otimes n}))^{\hat \TT}.$$
The  $n$ on the left hand side of the above equality should be the momentum of $Z$, as it is $n$-th power of $\xi$. Let 
$$\omega_{-n}-(\iota_v \omega_{-n})A\neq 0,$$ i.e. $\omega_{-n}$ is not some kind of product. 
Suppose $$(d+H)(\omega_{-n})=0.$$
Then from the proof of Theorem \ref{main}, we see that
\be (\hat{\pi}^*\nabla^{\xi^{\otimes n}}-\iota_{m\hat{v}}+\hat{H})\tau_n(\omega_{-n})=0 \iff m=n, \ee
where $m$ is the winding of $\hat Z$ as it the multiple of $\hat v$. This clearly also applies for the trivial bundle case.
}

\subsection{The trivial bundles case}
Consider the trivial bundles case. Now 
$$Z=M\times \TT, \ \ \hat Z=M\times \hat\TT$$ and $H, \hat H$ and the connections are all 0. 

Pick $\omega_{-n}\in \Omega^{even}(Z)_{-n}$. It is of the form $(\lambda_0+\lambda_1d\theta)e^{-2\pi i n\theta}$, where $\lambda_0, \lambda_1$ are forms on $M$. Then by definition, 
$$\tau_n (\omega_{-n})=-\lambda_0d\hat\theta-\lambda_1, \ \ \hat \sigma_n(-\lambda_0d\hat\theta-\lambda_1)=-\lambda_0-\lambda_1d\theta.$$

Suppose $d\omega_{-n}=0$. 

We have 
$$d\lambda_0=0, \ \ d\lambda_1-n\lambda_0=0.$$
Then
$$(d-\iota_{n\hat v} )\tau_n (\omega_{-n})=-(d-\iota_{n\hat v} )(\lambda_0d\hat\theta+\lambda_1)=-(d\lambda_1-n\lambda_0)=0,$$
i.e. $\tau_n (\omega_{-n})$ is exotic equivariant closed (in this case equivariant closed).

If $n\neq 0$, the homotopy (\ref{homotopy}) shows that 
$$d\left(\frac{1}{n}\lambda_1e^{-2\pi i n\theta} \right)=(\lambda_0+\lambda_1d\theta)e^{-2\pi i n\theta}=\omega_{-n},$$
i.e. $\omega_{-n}$ is $(d+H)$-exact (in this case $d$-exact).

One can similarly do the odd degree case. 

\bigskip

\end{document}